\newtheorem{theorem}{Theorem}[section]
\newtheorem{lemma}[theorem]{Lemma}
\newtheorem{proposition}[theorem]{Proposition}
\newtheorem{claim}[theorem]{Claim}
\theoremstyle{remark}
\theoremstyle{definition}
\providecommand{\keywords}[1]
{
  \small	
  \textbf{\textit{Keywords---}} #1
}
\newcommand{\rv}[1]{{\color{black}{#1}}}
\title{On the proper interval completion problem within some chordal subclasses}
\date{}
\newcommand{\nina}[1]{\todo[fancyline, color=orange!20]{{\bf Nina: }  #1}\xspace}
\newcommand{\valeria}[1]{\todo[fancyline, color=red!10]{{\bf Valeria: }  #1}\xspace}
\newcommand{\fran}[1]{\todo[fancyline, color=green!10]{{\bf François: }  #1}\xspace}
\newcommand{\suggestion}[2]{#2}
\newcommand{\newsuggestion}[2]{#2}
\author[1]{François Dross}
\author[1]{Claire Hilaire}
\author[2]{Ivo Koch}
\author[3]{Valeria Leoni}
\author[4,5]{Nina Pardal}
\author[3]{María Inés Lopez Pujato}
\author[6]{Vinicius Fernandes dos Santos}
\affil[1]{Univ. Bordeaux, CNRS, Bordeaux INP, LaBRI, UMR 5800, F-33400 Talence, France}
\affil[2]{University of General Sarmiento, Argentina}
\affil[3]{National University of Rosario and CONICET, Argentina}
\affil[4]{University of Buenos Aires, Argentina}
\affil[5]{ICC-CONICET, Argentina}
\affil[6]{Computer Science Department, Federal University of Minas Gerais, Brazil}
\begin{document}

\maketitle

\begin{abstract}
Given a property (graph class) $\Pi$, a graph $G$, and an integer $k$, the \emph{$\Pi$-completion} problem consists of deciding whether we can turn $G$ into a graph with the property $\Pi$ by adding at most $k$ edges to $G$. 
 The $\Pi$-completion problem is known to be NP-hard for general graphs when $\Pi$ is the property of being a proper interval graph (PIG). 
 In this work, we study the PIG-completion problem 
 within different subclasses of chordal graphs. We show that the problem remains NP-complete even when restricted to split graphs. We then turn our attention to positive results and present polynomial time algorithms to solve the PIG-completion problem when the input is restricted to caterpillar and threshold graphs. We also present an efficient algorithm for the minimum co-bipartite-completion for quasi-threshold graphs, which provides a lower bound for the PIG-completion problem within this graph class.
\end{abstract}

\keywords{proper interval completion, split graph, threshold graph, quasi-threshold graph, caterpillar}

\section{Introduction}









Graph modification problems can be used to address many fundamental problems, not only in graph theory itself, but also to model a large number of practical applications in several different fields. Some of \suggestion{the fields in which graph modification problems are widely used to model applications are}{of those fields include} molecular biology, computational algebra, and more \suggestion{precisely}{generally}, \suggestion{in}{} areas that involve modelling based on graphs where the missing edges are due to a lack of data, for example in data clustering problems~\cite{Goldberg1995,Natanzon2001}. In many of these applications, an edge modification of the graph that models the experimental data corresponds to correcting errors and inconsistencies in the data. 

Given a graph $G$ and a graph property $\Pi$, a \emph{graph modification problem} consists of studying how to add or delete the minimum number of vertices or edges from $G$ in order to obtain a graph that satisfies the property $\Pi$.
In this article, the property $\Pi$ will represent a graph class, such that the graph resulting from the modification belongs to this class. We focus our attention on one of the four basic graph modification problems: the $\Pi$-completion problem.

Given a property $\Pi$, a \emph{$\Pi$-completion} of a graph $G = (V, E)$ is a supergraph $H = (V, E \cup F)$ such that $H$ belongs to $\Pi$ and $E \cap F = \emptyset$. The edges in $F$ are referred to as \emph{fill edges}. A $\Pi$-completion $H = (V, E \cup F)$ of $G$ is \emph{minimum} if, for any set of fill edges $F'$ such that $H'= (V, E \cup F')$ belongs to $\Pi$, it holds that $|F'| \geq |F|$. In this case, $|F|$ is called the \emph{$\Pi$-completion number} of $G$.
The \emph{minimum $\Pi$-completion problem} consists in 
finding the $\Pi$-completion number of a graph $G$. The associated decision problem--the \emph{$\Pi$-completion problem}--consists in deciding, for a given integer $k$, if $G$ has a $\Pi$-completion with at most $k$ fill edges. Throughout this work, a $\Pi$-completion will always be a minimum one unless otherwise stated.

The $\Pi$-completion problem from an arbitrary graph is known to be NP-complete when $\Pi$ is the class of chordal, interval, or proper interval graphs~\cite{GAREY, Goldberg1995, Kashiwabara1979, Yannakakis1981}. 
Furthermore, when $\Pi$ is the class of interval graphs, it was shown that the problem remains NP-complete on line graphs~\cite{GAREY}, and also on co-bipartite graphs~\cite{Yuan1998}. \suggestion{Díaz et al.\ showed that it can be solved in $O(n)$-time for trees~\cite{Diaz1991}.}{According to Peng et al.~\cite{PengChen06}, the problem can be solved in $O(n)$-time for trees, following from the results in~\cite{Diaz1991}}. 
In the case of chordal-completion and proper interval-completion (\emph{PIG}-completion from now on), its study from the viewpoint of parameterized complexity was initiated by Kaplan et al.\ in 1999~\cite{K-S-T-interval}. In 2015, Bliznets et al.\ presented the first subexponential parameterized algorithm for PIG-completion that finds a solution in $k^{O(k^{2/3})}+O(nm(kn+m))$-time~\cite{Bliznets2015}.\\
\indent Monadic second order logic for graphs is a fragment of second order logic in which formulas allow logical operations ($\vee, \wedge, \neg, \Rightarrow$), adjacency/incidency tests, membership tests ($v \in V, e \in E)$, quantification over vertices or edges ($\exists v \in V, \exists e \in E, \forall v \in V, \forall e \in E)$, and over sets of vertices or edges ($\exists U \subseteq V, \forall D \subseteq E$). When quantification over edge sets is not allowed, the logic is called $MSOL_1$, whereas when it is allowed it is called $MSOL_2$.
A celebrated result by Courcelle et al.\ states that each graph property that is expressible in $MSOL_1$ (resp. $MSOL_2$) can be solved in polynomial time for graphs with bounded cliquewidth (resp. treewidth)~\cite{Courcelle00lineartime}. Note that this result is mainly of theoretical interest and does not lead to practical algorithms. Since the problem of \suggestion{}{finding a} PIG-completion \suggestion{}{with at most $k$ edges} can be expressed in $MSOL_2$ \suggestion{}{for fixed $k$}, this motivates our search of efficient algorithms for subclasses of chordal graphs with bounded treewidth.\\
\indent Another direction \suggestion{in}{on} which current research on this topic is focused is in finding and characterizing minimal completions 
for input graph classes for which the minimum version is hard in the most efficient possible way from a computational point of view~\cite{CrespelleT13,HeggernesSTV07,RapaportST08}. 


Throughout this work, we consider the target class $\Pi$ to be the subclass of interval graphs given by proper interval graphs. 
The most well known motivation for the 
PIG-completion problem comes from molecular biology. In~\cite{Benzer1959}, Benzer first gave strong evidence that the collection of DNA composing a bacterial gene was linear. This linear structure could be represented as overlapping intervals on the
real line, and therefore as an interval graph. In order to study various
properties of a certain DNA sequence, the original piece of DNA is fragmented into smaller pieces which are then cloned many times. When
all the clones have the same size, the resulting graph should not only be interval, but proper interval.
Deciding whether two clones should overlap or not is the critical part. However, there might be some false positive or false negatives due to erroneous interpretation of some data. Thus, correcting the model to get rid of inconsistencies is equivalent to removing or adding as few edges as possible to the graph so that it becomes interval. 




This work is organized as follows. We start by proving in Section~\ref{sec:split_PIG} that the PIG-completion problem remains hard even if the input graph $G$ is split. Since split graphs are in particular chordal, this result implies that the problem remains hard when $G$ is a chordal graph, which leads us to study proper subclasses of chordal graphs where the PIG-completion problem might be tractable.
More precisely, in Section~\ref{sec:threshold} we first give an efficient algorithm for the PIG-completion problem on threshold graphs. We finish the section by showing an efficient dynamic programming algorithm for the co-bipartite-completion problem on quasi-threshold graphs
following a brief discussion on the difficulties of generalizing the previous result to this superclass of threshold graphs. In Section~\ref{sec:caterpillar}, we show an efficient algorithm for a very sparse class of graphs, a subclass of both interval graphs and trees called caterpillars.
We conclude the paper with some final remarks and possible future directions in Section~\ref{sec:conclusions}.


\section{Definitions}



We give in this section the basic definitions and fix the notation that will be used throughout this work. All graphs in this paper are undirected and simple. Let $G$ be a graph, and let $V(G)$ and $E(G)$ denote its vertex and edge sets, respectively. We denote by $n$ the number of vertices and by $m$, the number of edges.
Whenever it is clear from the context, we simply write $V$ and $E$ and denote $G=(V,E)$. \suggestion{}{For basic definitions not included here, we refer the reader to \cite{Bondy}}. 

Given a graph $G$ and $S\subseteq V$, the \emph{subgraph of $G$ induced by $S$}, denoted by $G[S]$, is the graph with vertex set $S$ and such that two vertices of $S$ are adjacent if and only if they are adjacent in $G$. When \suggestion{$G'=G[S]$}{$G'$ and $G[S]$ are isomorphic} for some $S \subseteq V$, \suggestion{}{with a slight abuse of terminology} we simply say that $G'$ is an \emph{induced subgraph} of $G$.
For any family $\mathcal{F}$ of graphs, we say that $G$ is \emph{$\mathcal{F}$-free} if $G$ does not contain any graph $F\in \mathcal{F}$ as an induced subgraph. If a graph $G$ is $\mathcal{F}$-free, then the graphs in $\mathcal{F}$ are called the \emph{forbidden induced subgraphs} of $G$.

A \emph{clique} \suggestion{of}{in} a graph $G$ is a complete induced subgraph of $G$. Also, we will often use this term for the vertex set that induces the clique.


\newsuggestion{}{A graph is \emph{chordal} if it does not admit an induced cycle of size $4$ or more.}

\newsuggestion{}{A graph $G$ is an \emph{interval} graph if it admits an intersection model consisting of intervals on the real line, that is, a family $\mathbf{I}$ of intervals on the real line and a mapping from the set of vertices of $G$ to the intervals of $\mathbf{I}$ such that two vertices are adjacent in $G$ if and only if the corresponding intervals intersect. Notice that the class of interval graphs is a subclass of the class of chordal graphs.}


A \emph{proper interval} graph is an interval graph that admits a \emph{proper interval model}, this is, an intersection model in which no interval is properly contained in any other. 

\newsuggestion{}{A \emph{unit interval} graph is an interval graph that has an interval representation in which each interval has unit length. Every proper interval graph is a unit interval graph, and viceversa \cite{roberts1969indifference}.}

Three nonadjacent vertices of a graph form an $AT$ (\emph{asteroidal triple}) if every two of them are connected by a path avoiding the neighbourhood of the third.
Interval graphs are precisely those chordal graphs that are also $AT$-free \newsuggestion{}{\cite{Lekkeikerker}} and proper interval graphs are precisely those chordal graphs that are also  $\{\text{claw}, \text{tent}, \text{net}\}$-free \newsuggestion{}{\cite{roberts_phd_th,roberts1969indifference}} (see Figure~\ref{fig:S3clawnet}).

\begin{figure}
    \centering
    \includegraphics{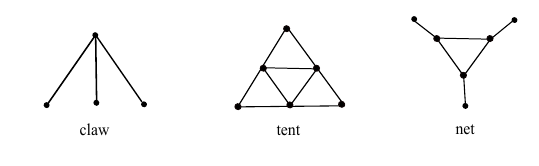}
    \caption{Forbidden induced subgraphs for proper interval graphs}
    \label{fig:S3clawnet}
\end{figure} 

A graph $G=(C \cup I,E)$ is a \emph{split graph} if its vertex set can be partitioned into a set $C$ of pairwise adjacent vertices and set $I$ of pairwise nonadjacent vertices.

A \emph{threshold} graph is a split graph in which any two \newsuggestion{independent}{nonadjacent} vertices satisfy that the neighborhood of one is contained in the neighborhood of the other.
Equivalently, $G$ is a threshold graph if it can be constructed from the empty graph by repeatedly adding either an \emph{isolated vertex} (nonadjacent to every other vertex) or a \emph{dominating vertex} (adjacent to every other vertex). \newsuggestion{}{Let the ordering of $V(G)$ according to this construction procedure be the \emph{threshold ordering}}. Threshold graphs are characterised precisely as \newsuggestion{those}{the} $\{2K_2, C_4, P_4\}$-free graphs.

\emph{Quasi-threshold graphs}, also called \emph{trivially perfect graphs}, are \suggestion{those}{the} $\{P_4,C_4\}$-free graphs. A connected quasi-threshold graph \newsuggestion{$G = (V, E(G))$}{$G = (V, E_G)$} admits a rooted tree \newsuggestion{$T = (V, E(T))$}{$T = (V, E_T)$} on the same vertex set $V$, rooted on a vertex $r$, such that \newsuggestion{$uv \in E(G)$}{$uv \in E_G$} if and only if there is a path in $T$ starting in $r$ containing both $u$ and $v$.





A graph $G$ is a \emph{caterpillar} if $G$ is a tree in which the removal of all the pendant vertices (i.e., the \emph{leaves}) results in a path (i.\ e.\ the \emph{spine} or \emph{central path}).






\newsuggestion{}{A family $\mathcal{S}$ of nonempty sets has the \emph{Helly property} if every nonempty subfamily of $\mathcal{S}$ of pairwise intersecting sets has a nonempty intersection. This property is also known as the \emph{2-Helly property}. For example, any family of nonempty pairwise-intersecting intervals in the real line has the 2-Helly property.}


\section{PIG-completion within split graphs}\label{sec:split_PIG}

We now devote our attention to the complexity of the PIG-completion problem when the input graph belongs to the class of split graphs. 
We start by citing some useful results and stating a few lemmas that characterise PIG-completions when the split partition fulfills certain properties. These results will be useful in Section~\ref{subsec:reduction_split}, where we give a reduction to the \newsuggestion{problem}{PIG-completion problem}.

Recall the following result:

\begin{theorem}[Peng et al.~\cite{PengChen06}]\label{theo:threshold_to_split}
The threshold-completion problem is NP-complete on split graphs. 
\end{theorem}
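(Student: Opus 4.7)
The plan is to prove NP-completeness in two parts. Membership in NP is immediate: given a candidate fill-in $F$ with $|F|\leq k$, one verifies in polynomial time that $(V,E\cup F)$ avoids the forbidden induced subgraphs $2K_2$, $C_4$, and $P_4$ that characterise threshold graphs.

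For hardness I would reduce from Minimum Chain Graph Completion on bipartite graphs, shown NP-hard by Yannakakis~\cite{Yannakakis1981}. Given an instance consisting of a bipartite graph $B=(X,Y,E_B)$ and an integer $k$, I construct the split graph $G=(V,E)$ by setting $V=X\cup Y$ and $E=E_B\cup\binom{X}{2}$, so that $X$ becomes a clique and $Y$ remains independent. The bridge between the two problems is the observation that a split graph is threshold exactly when the neighborhoods in the clique of the independent vertices are nested, i.e., form a chain. The forward direction is then immediate: a chain completion of $B$ adding $k$ bipartite edges is automatically a threshold completion of $G$ with the same $k$ fill edges.

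The main obstacle is the reverse direction. From a threshold completion $H=(V,E\cup F)$ of $G$ with $|F|\leq k$ one must extract a chain completion of $B$ with at most $k$ bipartite fill edges. The difficulty is that $F$ may contain edges inside $Y$, which amounts to reassigning some vertices of $Y$ to the clique side of $H$; these edges have no direct counterpart in the bipartite instance. My plan here is a normalization argument: show that, without increasing $|F|$, any such completion can be transformed into one whose fill edges lie entirely between $X$ and $Y$. This relies on the near-uniqueness of the split partition of a threshold graph (modulo universal vertices) and on local exchange steps that trade each intra-$Y$ fill edge for a chain-preserving family of cross edges, producing a threshold graph of no larger fill-in. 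The technical heart of the reduction is thus this exchange lemma; once it is in place, the equivalence between instances is direct and the reduction is clearly polynomial, so NP-hardness transfers from Chain Graph Completion to the threshold-completion problem on split graphs.
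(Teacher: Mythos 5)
A preliminary remark: the paper does not prove this statement at all. Theorem~\ref{theo:threshold_to_split} is imported verbatim from Peng et al.~\cite{PengChen06} and used as a black box in the reduction of Theorem~\ref{theo:2_split}, so there is no in-paper proof to compare yours against. Your overall strategy---membership in NP via the $\{2K_2,C_4,P_4\}$ characterisation, and hardness by reduction from chain graph completion~\cite{Yannakakis1981} using the fact that a split graph with clique $X$ and independent set $Y$ is threshold exactly when the traces $N(y)\cap X$, $y \in Y$, form a chain under inclusion---is the natural route to this result and, as far as I can tell, the one taken in the cited source. The forward direction and the polynomiality of the construction are fine.

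The genuine gap is the one you flag yourself: the ``exchange lemma'' disposing of fill edges inside $Y$ is asserted, not proved, and the mechanism you sketch (``local exchange steps'' trading each intra-$Y$ fill edge for cross edges) would not survive being made precise. Deleting a single intra-$Y$ fill edge $y_1y_2$ from a threshold completion $H$ can leave the class: if $N_H[y_1]\subseteq N_H[y_2]$, $w\in N_H(y_1)\setminus\{y_2\}$, and $v$ is adjacent to $y_2$ but to neither $y_1$ nor $w$, then $y_1,w,y_2,v$ induces a $P_4$ after the deletion (a four-vertex example with edges $wy_1,wy_2,y_1y_2,y_2v$ already exhibits this), and it is then unclear which cross edges repair the damage at no net cost. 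What actually works is a global argument: $H$ is split, so fix a split partition $(C',I')$ of $H$ and let $Z=Y\cap C'$ be the $Y$-vertices absorbed into the clique side. The completion $H$ pays at least $\binom{|Z|}{2}+\sum_{z\in Z}|(X\cap C')\setminus N_G(z)|$ for these vertices, whereas the alternative completion that keeps all of $Y$ independent, makes every $z\in Z$ complete to $X$ (hence the top of the nesting chain, comparable to everything), and gives every $y\in Y\setminus Z$ the trace $N_H(y)\cap X$ (still a chain, since the vertices of $I'$ are pairwise non-adjacent in $H$ and intersecting a chain with a fixed set preserves comparability) is again threshold and saves the $\binom{|Z|}{2}$ term. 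One still has to handle the degenerate situation where a vertex of $X$ lies on the independent side of $(C',I')$, which makes $|X\setminus N_G(z)|$ exceed $|(X\cap C')\setminus N_G(z)|$ by one; this is where a short direct argument or a padding of $X$ is needed. Until some such lemma is written down in full, the reverse direction of your reduction, and hence the hardness claim, is not established.
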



Let us consider a connected split graph $G=(C\cup I, E)$ such that $C$ is maximum, in the sense that there is no vertex in $I$ \newsuggestion{complete to $C$}{adjacent to every vertex of $C$}, and let $H=(C\cup I, E\cup F)$ be a PIG-completion of $G$. 

\begin{lemma}\label{lemma:1_split}
There exist both a partition $\{C',I'\}$ of the vertex set of $H$, where $I'\subseteq I$ and $C'$ is a clique with $C'\supseteq C$, and a partition $\{I_l,I_r\}$ of $I'$ such that $H[I_l]$ and $H[I_r]$ are both cliques.
\end{lemma}

\begin{proof}
\newsuggestion{}{Let $\{\mathbf{I}_u\}_{u\in V(H)}$ be a unit interval model for the graph $H$, which we know it exists since $H$ is proper interval. It follows from the 2-Helly property of intervals and the fact that $C$ is a clique in $H$ that there exists a real point $p$ in $\bigcap_{u \in C} \mathbf{I}_u$. For each vertex $v \in I$, if either the right endpoint of $I_v$ is to the left of $p-1$, or the left endpoint of $I_v$ is to the right of $p+1$, then $I_v$ does not intersect any interval in $\{I_u: u \in C\}$. We may conclude that $v$ is pairwise nonadjacent to every vertex of $C$ in $H$ and thus $v$ is an isolated vertex in $G$, which leads to a contradiction given that $G$ is connected. Let us now consider $I_l = \{u \in I : p-1 \in \mathbf{I}_u\}$ and  $I_r = \{u \in I : p+1 \in \mathbf{I}_u\}$, which are both cliques in $H$. Moreover, if we define $C'= C\cup I \setminus (I_l \cup I_r)$, then it is clear from our previous remark that, for every vertex $u \in  I \setminus (I_l \cup I_r)$, the unit interval $I_u$ is fully contained in the interval $[p- 1, p+1]$ and thus it contains the point $p$. Furthermore, $p$ is also contained in the interval corresponding to every vertex in $C$, and therefore $C'$ is a clique in $H$ as well.}
\end{proof}

\begin{lemma}\label{lemma:2_split}
Let $H=(C' \cup I', E\cup F)$ be a PIG-completion of $G$ and $\{I_l, I_r\}$ be the partition of $I'$ into cliques as in the previous lemma. 
Then, the graphs given by $H_l=(C'\cup I_l, E\cup F \setminus F_{I'})$ and $H_r=(C'\cup I_r, E\cup F \setminus F_{I'})$ are threshold graphs, where $F_{I'}= \{ uv \in F : u,v \in I' \}$.
\end{lemma}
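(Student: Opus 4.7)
The plan is to verify that $H_l$ (and symmetrically $H_r$) fits the paper's own definition of a threshold graph: a split graph in which the neighborhoods of the independent-set vertices are pairwise nested. The whole argument will lean on two ingredients: a direct combinatorial check that $H_l$ is split, and the structure of a proper interval representation of $H$ to force nested neighborhoods on the independent side.

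First I would establish the split structure of $H_l$. Since $C'$ is a clique of $H$ and $F_{I'}$ contains no edge inside $C'$, $C'$ remains a clique in $H_l$. Because $I_l \subseteq I' \subseteq I$ and $I$ is independent in $G$, every edge of $H$ inside $I_l$ must lie in $F$ and therefore in $F_{I'}$; all such edges are deleted when forming $H_l$. Hence $I_l$ is independent in $H_l$, and $(C', I_l)$ is a split partition of $H_l$. Moreover, no edge between $C'$ and $I_l$ is removed, so $N_{H_l}(u) \cap C' = N_H(u) \cap C'$ for every $u \in I_l$, and it is enough to prove that these $H$-neighborhoods are pairwise nested.

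The core of the argument would exploit a proper interval representation of $H$. Fixing a corresponding PIG ordering $\sigma$ of $V(H)$, I would invoke the well-known fact (a consequence of the Helly property of intervals on the line) that every clique of a PIG is a set of consecutive vertices in $\sigma$. Since $C'$, $I_l$, and $I_r$ are three pairwise disjoint cliques partitioning $V(H)$, they occupy three consecutive, non-interleaving blocks in $\sigma$, so in particular $C'$ lies entirely to one side of the $I_l$-block. For any $u, v \in I_l$ with $u$ preceding $v$ in $\sigma$, the standard monotonicity of closed neighborhoods along a PIG ordering gives that both endpoints of $N_H[u]$, seen as an interval in $\sigma$, are at most the corresponding endpoints of $N_H[v]$. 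If $C'$ lies entirely to the right of $I_l$, then $N_H(u) \cap C'$ is the prefix of $C'$ cut off by the right endpoint of $N_H[u]$, which is contained in the analogous prefix for $v$; the case where $C'$ lies entirely to the left is symmetric. Either way, $N_H(u) \cap C'$ and $N_H(v) \cap C'$ are nested, so $H_l$ is a threshold graph by the paper's definition, and the argument for $H_r$ is identical.

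The main obstacle I anticipate is the clean invocation of the PIG ordering to force $C'$, $I_l$, and $I_r$ into disjoint, non-interleaving consecutive blocks: once this is in place, the nestedness of the $I_l$-neighborhoods in $C'$ drops out of monotonicity. Some care will also be needed to keep track of which edges $F_{I'}$ actually removes (it may include $I_l$-$I_r$ crossings, but these play no role in $H_l$ since the $I_r$-vertices are absent), and to handle degenerate cases such as empty $I_l$, empty $I_r$, or $I_l$-vertices with no neighbor in $C'$, all of which should be immediate.
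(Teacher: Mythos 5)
Your first paragraph is fine and in fact supplies a detail the paper leaves implicit: $(C',I_l)$ is a split partition of $H_l$, the $C'$-neighborhoods of the $I_l$-vertices are unchanged from $H$, and so the whole lemma reduces to showing that those $H$-neighborhoods are pairwise nested. The gap is in the second paragraph. The fact you invoke --- that every clique of a proper interval graph occupies a set of consecutive positions in a PIG ordering $\sigma$ --- is false: only \emph{maximal} cliques are consecutive in such an ordering, and $C'$, $I_l$, $I_r$ are not maximal in general (an $I_l$-vertex together with its $C'$-neighbors already forms a larger clique). For a concrete failure, take $H=K_4$ with $C'=\{x_1,x_2\}$, $I_l=\{u\}$, $I_r=\{v\}$: the ordering $x_1,u,x_2,v$ is a perfectly valid PIG ordering in which $C'$ is not a block. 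Consequently you cannot assume that $C'$ lies entirely to one side of the $I_l$-block, and the ``$N_H(u)\cap C'$ is a prefix of $C'$'' step on which your nestedness conclusion rests does not go through as written.

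The good news is that the conclusion follows without any block structure, and the repair is essentially the paper's own one-line proof. If $u,w\in I_l$ had non-nested $C'$-neighborhoods, witnessed by $x\in N_H(u)\setminus N_H(w)$ and $y\in N_H(w)\setminus N_H(u)$ with $x,y\in C'$, then $uw\in E(H)$ (as $I_l$ is a clique of $H$) and $xy\in E(H)$ (as $C'$ is a clique), while $uy,wx\notin E(H)$; hence $\{u,x,y,w\}$ induces a $C_4$ in $H$, contradicting that $H$ is a proper interval (in particular chordal) graph. If you prefer to stay in your geometric language: since the intervals of $u$ and $w$ overlap without containment, any $x$ meeting $u$ but not $w$ lies strictly to one side of their common region and any $y$ meeting $w$ but not $u$ lies strictly to the other side, so the intervals of $x$ and $y$ are disjoint, contradicting $x,y\in C'$. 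Either version uses only that $I_l$ and $C'$ are cliques of $H$, which is exactly what Lemma~\ref{lemma:1_split} provides.
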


\begin{proof}
If $H_l$ is not threshold, then there are vertices $v,w \in I_l$ (nonadjacent in $H_l$) and $x,y \in C'$ such that $vx, wy \in E\cup F \setminus F_{I'}$ and $vy, wx \not\in E\cup F \setminus F_{I'}$. Notice that $v$ and $w$ are adjacent in $H$ since they both lie in $I_l$, which is a clique. Therefore, we find an induced $C_4$ in $H$, which results in a contradiction \newsuggestion{for}{since} $H$ is a proper interval graph.
\end{proof}

This theorem follows directly from the previous lemmas.

\begin{theorem}\label{theo:1_split}
If $H=(C' \cup I', E\cup F)$ is a PIG-completion of $G$, then there is a partition $\{I_l, I_r\}$ of $I'$ such that $H_l, H_r$ are threshold graphs and $I_l,I_r$ are cliques in $H$.
\end{theorem}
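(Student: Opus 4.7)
The plan is to obtain the theorem as an immediate consequence of Lemmas \ref{lemma:1_split} and \ref{lemma:2_split}, with only minor bookkeeping needed to reconcile the statement's use of $C'$ and $I'$ with the partitions produced by those lemmas.

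First, I would apply Lemma \ref{lemma:1_split} to $H$. This yields two possibilities: either $I$ itself already admits a partition into two sets $I_l, I_r$ that induce cliques in $H$, in which case we set $C' := C$ and $I' := I$; or there is a larger clique $C' \supseteq C$ and a smaller independent-side set $I' \subsetneq I$ to which the first case applies, giving again a partition $\{I_l, I_r\}$ of $I'$ with $H[I_l]$ and $H[I_r]$ complete. In both cases we end up with a partition of the vertex set of $H$ of exactly the form demanded by the theorem, namely $V(H) = C' \cup I'$ with $I' = I_l \cup I_r$ and $I_l, I_r$ cliques in $H$.

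Next, I would invoke Lemma \ref{lemma:2_split} on this partition. That lemma asserts that, once the clique partition $\{I_l, I_r\}$ of $I'$ is fixed, the graphs $H_l = (C' \cup I_l, (E \cup F) \setminus F_{I'})$ and $H_r = (C' \cup I_r, (E \cup F) \setminus F_{I'})$ are threshold, which is exactly the remaining part of the conclusion. Combining the two outputs gives the completion-partition $\{I_l, I_r\}$ of $H$ as named in the theorem.

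There is no real obstacle here: the only point to verify carefully is that the partition $\{I_l, I_r\}$ furnished by Lemma \ref{lemma:1_split} is the same one to which Lemma \ref{lemma:2_split} is subsequently applied, and that in the second case of Lemma \ref{lemma:1_split} the new split sides $C', I'$ are indeed consistent with the statement (which is immediate because the theorem is phrased directly in terms of $C'$ and $I'$ rather than the original $C$ and $I$). Hence the proof reduces to a one-line citation of the two lemmas, consistent with the remark in the excerpt that the theorem follows directly from them.
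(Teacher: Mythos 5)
Your proposal is correct and matches the paper exactly: the paper itself gives no separate proof, stating only that the theorem ``follows directly from the previous lemmas,'' which is precisely the two-step citation of Lemma~\ref{lemma:1_split} (to obtain the partition $\{I_l,I_r\}$ of $I'$ into cliques, possibly after enlarging $C$ to $C'$) followed by Lemma~\ref{lemma:2_split} (to conclude that $H_l$ and $H_r$ are threshold). No further comment is needed.
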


Finally, we will need the following property of proper interval graphs. 
Let $G$ be an interval graph, and let $\mathbf{I}_G$ be an interval model of $G$. We say that a clique $C$ of $G$ is a \emph{first clique} of $G$ in $\mathbf{I}_G$ if there is a real point $p$ such that $p$ intersects every interval corresponding to the vertices of $C$, and \newsuggestion{all the others intervals}{every other interval} of $\mathbf{I}_G$ \newsuggestion{are}{lies} strictly on the same side of $p$.
Notice that such a clique always exists, since it can be found by restricting to the rightmost interval of the model.
 
\begin{lemma}\label{lem:gluingPIG}
    Let $G=(V,E)$ and $G'=(V',E')$ be two vertex-disjoint proper interval graphs, with respective proper interval models $\mathbf{I}_G$ and $\mathbf{I}_{G'}$, and let $C$ (resp.\ $C'$) be a first clique of $G$ in $\mathbf{I}_G$ (resp.\ $G'$ in $\mathbf{I}_{G'}$). Then, the graph $G''=(V\cup V' , E\cup E' \cup F)$ is also a proper interval graph, where $F$ consists of all \newsuggestion{}{the} possible edges between $C$ and $C'$.
\end{lemma}
\begin{proof}
    Let $\mathbf{I}_G$ be a proper interval model of $G$. 
    Up to \newsuggestion{taking the mirror of }{inverting right and left in} $\mathbf{I}_G$, we can assume that $C$ is on the right side of the model. Thus, $G$ admits a proper interval model $\mathbf{I}_G=\{\mathbf{I}_u\}_{u\in V} = \{[a_u,b_u]\}_{u\in V}$, with $a_u<b_u$, and there is a real point $p$ intersecting every interval of $C$ such that all the other intervals lie strictly to the left of $p$. 
    
    Let $k\geq 1$ be the size of $C$, let $n \geq k$ the size of $V$,
    and let $V=\{u_1, \dots, u_{n}\}$ be an ordering of the vertices such that $C=\{u_1, \dots, u_{k}\}$ and $b_{u_{k+1}} < \dots < b_{u_{n}} < p \leq b_{u_1} < \dots < b_{u_k}$.
    For each $i\in \{1,\dots, n\}$, let $c_{u_i} = p+1+\frac{i}{k}$ if $i\leq k$, and $c_{u_i}=b_{u_i}$ for the remaining vertices. 
    Observe that 
    $ p < p+1 \newsuggestion{\leq}{<} c_{u_1} < \dots < c_{u_k} = p+2$. Moreover, for each $i\in \{1,\dots, n\}$, it follows from the definition that $a_{u_i}<p$, and also $c_{u_i}<p$ for every $i>k$.

    Consider now the interval model given by $\{[a_u,c_u]\}_{u\in V}$.
    Note that, since we have not modified the ordering of the endpoints of the intervals, this is also a proper interval representation of $G$ where every interval that does not represent a vertex of $C$ lies strictly on the left of $p$, and all the remaining intervals are to the left of $p+2$.

    \medskip

    Consider now the proper interval model of $G'$ given by $\mathbf{I}_{G'}=\{\mathbf{I}'_v\}_{v\in V'} = \{[a_v,b_v]\}_{v\in V'}$, and let $p'$ be a real point intersecting $C'$ such that every other interval lies strictly to the \newsuggestion{left}{right }
    of $p'$.  
    Upon shifting all those intervals, we can assume that $p'=p+2$.

    Let $k'\geq 1$ be the size of $C'$, let $n' \geq k'$ the size of $V'$,
    and let $V'=\{v_1, \dots, v_{n'}\}$ be an ordering of the vertices such that $C'=\{v_1, \dots, v_{k'}\}$ and $a_{v_{k'}} < \dots < a_{v_1} \leq p' <  a_{v_{k'+1}} < \dots < a_{v_{n'}}$.
    For each $i\in \{1,\dots, n'\}$, let $c_{v_i} = p+1-\frac{i}{k'}$ if $i\leq k'$, and let $c_v=a_v$ for every other vertex.
    Observe that $p = c_{v_{k'}} < \dots < c_{v_1} < p+2$. Furthermore, for each $i\in \{1,\dots, n'\}$ it follows that $b_{v_i}>p+2$, and also $c_{v_i}>p+2$ for every $i>k'$.

    Consider the interval model given by $\{[c_v,b_v]\}_{v\in V'}$.
    Once more, since we have not modified the ordering of the endpoints of the intervals, this yields a proper interval representation of $G'$ such that all the intervals that do not represent a vertex of $C'$ lie strictly to the right of $p+2$, and all the intervals of $C'$ lie to the right of $p$.

    \medskip

    Finally, consider the interval model $\mathbf{I}$ given by $\{[a_u,c_u]\}_{u\in V} \cup \{[c_v,b_v]\}_{v\in V'}$,
    and let us see that $\mathbf{I}$ is indeed a proper interval model.
    Suppose that there are two intervals $[a,b]$ and $[a',b']$, representing $u$ and $u'$ respectively, such that $a<a'<b'<b$. By construction, one of them lies in $V$ and the other in $V'$.
    Suppose without loss of generality that $u\in V$ and $u'\in V'$. Hence, $b\leq p+2$ and \newsuggestion{$p+2 < a'$, which contradicts the fact that $a'<b$.}{$p+2 < b'$, which contradicts the fact that $b'<b$.} We reach an analogous contradiction if $u'\in V$ and $u\in V'$. 

    Let $G''=(V'',E'')$ be the proper interval graph corresponding to the proper interval model $\mathbf{I}$. By construction, it follows that $V\cup V'= V''$ and $E\cup E' \subseteq E''$.
    For every $u\in C$ and $v\in C'$, their corresponding intervals intersect in $p+1$, hence $G'$ contains all the edges between $C$ and $C'$. Let $F$ be said set of edges. 
    Let $u \in V\setminus C$, let $v \in V'$, and let $[a,b], [a',b']$ \newsuggestion{}{be} their respective intervals. By construction, $a<b<p \leq a'<b'$ and thus $u$ and $v$ must be nonadjacent. Similarly, if $u \in V$ and $v \in V'\setminus C$, then $a<b\leq p+2 < a'<b'$.
    Therefore, the only edges between $V$ and $V'$ are precisely those in $F$.

 \end{proof}

\subsection{NP-completeness}\label{subsec:reduction_split}

We are now ready to prove that obtaining a PIG-completion is still NP-complete when the input graph is split. In order to do this, we strongly rely on the previous lemmas and the fact that threshold-completion on split graphs is also NP-complete.

\begin{theorem}\label{theo:2_split}
 The PIG-completion problem is NP-complete on split graphs.
\end{theorem}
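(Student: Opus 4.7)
The plan is a reduction from the threshold-completion problem on split graphs, which is NP-complete by Theorem~\ref{theo:threshold_to_split}. Membership in NP is immediate: given a purported fill set $F$ with $|F|\leq k$, one verifies in polynomial time that $G+F$ is a proper interval graph, for instance by checking the absence of induced claws, tents, nets, and chordless cycles of length at least four.

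The central algebraic ingredient is Theorem~\ref{theo:1_split}, which, together with Lemma~\ref{lemma:1_split}, implies that any PIG-completion $H$ of a split graph $G=(C\cup I,E)$ uses exactly
\begin{equation*}
\binom{|I_l|}{2}+\binom{|I_r|}{2}+\tau(G[C\cup I_l])+\tau(G[C\cup I_r])
\end{equation*}
fill edges, where $\{I_l,I_r\}$ is the completion-partition of the independent side and $\tau(\cdot)$ denotes the threshold-completion number. Setting $I_r=\emptyset$ in this identity already yields
\begin{equation*}
\text{PIG-completion number of }G\;\leq\;\tau(G)+\binom{|I|}{2},
\end{equation*}
so, with $k'=k+\binom{|I|}{2}$, the forward direction is automatic: any threshold-completion with at most $k$ fill edges extends to a PIG-completion with at most $k'$ edges by turning $I$ into a clique.

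For the converse, the obstacle is that a balanced completion-partition with $|I_l|\approx|I_r|\approx|I|/2$ can save up to roughly $|I|^2/4$ edges on the two binomial terms, which might exceed the extra threshold-completion cost incurred on the two halves. To prevent this, the plan is to augment $G$ by a rigidifying gadget attached to $C$ whose role is to force any optimal completion-partition to place all of $I$ on a single side. A natural candidate is a bank of pairwise non-adjacent vertices together with a carefully chosen pattern of adjacencies into $C$, sized polynomially in $|I|$, ensuring that splitting the original $I$-vertices between $I_l$ and $I_r$ creates unavoidable $C_4$- or claw-shaped obstructions whose repair cost strictly dominates the binomial savings. With such a gadget, every optimal PIG-completion of $G'$ has $I_r=\emptyset$ modulo gadget vertices, and so its size equals $\tau(G)$ plus a constant determined by the gadget, making the equivalence with threshold-completion of $G$ exact.

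The main obstacle is precisely the gadget design and its quantitative analysis: one must show that the gadget preserves the split property, that it does not alter $\tau(G)$ itself, and that every non-trivial split of the original $I$ pays more in extra threshold fill than it saves on the quadratic terms. Once these properties are secured, the equivalence $(G,k)\in\text{threshold-completion}\Longleftrightarrow(G',k')\in\text{PIG-completion}$ follows by a direct arithmetic check based on Theorem~\ref{theo:1_split}.
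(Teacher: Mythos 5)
There is a genuine gap: the reduction itself is never constructed. You correctly identify the source problem (threshold-completion on split graphs, Theorem~\ref{theo:threshold_to_split}), the NP membership argument, and the structural consequence of Theorem~\ref{theo:1_split}, and your forward direction (cliquifying $I$ on top of a threshold completion yields a proper interval graph, since the result is co-bipartite and the nested neighbourhoods exclude induced $C_4$'s) is sound. But the entire hardness direction rests on a ``rigidifying gadget'' that you only describe as ``a natural candidate'' and whose design and quantitative analysis you explicitly defer. That gadget is the theorem. Worse, the specific plan --- forcing \emph{all} of $I$ into a single clique $I_l$ --- is working against the structure of the problem: any gadget you attach to $C$ that is itself a split graph contributes new independent vertices which can likewise be distributed between $I_l$ and $I_r$, and a balanced split of $I'$ always saves $\Theta(|I'|^2)$ on the binomial terms. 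It is far from clear that any polynomial-size gadget can penalize every non-trivial split of the original $I$ by more than this saving, and you give no candidate for which this could be verified. Two smaller issues: your ``exact'' fill-edge identity needs the converse direction (that cliquifying $I_l,I_r$ and threshold-completing both sides always yields a PIG), and it silently assumes $I'=I$, whereas statement~\ref{item:split_claim12} of Lemma~\ref{lemma:1_split} allows vertices of $I$ to migrate into the clique side.

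The paper resolves exactly the difficulty you ran into, but by symmetrization rather than one-sided forcing: it builds $G'$ from \emph{two} disjoint copies of $G$ whose clique sides are merged into one clique and each padded with $n^2$ extra clique vertices. Any completion that mixes vertices of the two copies on the same side of the (normalized, optimal) path decomposition guaranteed by Lemma~\ref{lemma:starlike_optimal_pd} costs at least $n^2$ fill edges, so an optimal completion must place copy $1$ entirely on the left and copy $2$ entirely on the right; by symmetry each copy contributes the same number $k$ of fill edges, and the completion-partition of Theorem~\ref{theo:1_split} is forced to be exactly $(I_1,I_2)$. This turns the unavoidable two-sidedness of a PIG-completion into a feature instead of an obstacle, and yields the clean equivalence ``$G'$ has a PIG-completion with $2k$ edges iff $G$ has a threshold-completion with $k-\binom{|I|}{2}$ edges.'' If you want to salvage your write-up, replacing the unspecified gadget by this duplication construction is the missing step.
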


\begin{proof}
Given a completion of a split graph, it is easy to check in polynomial time \newsuggestion{it}{if} this is in fact a PIG-completion, hence the problem is in NP. 

We give a reduction from threshold-completion on split graphs. Let $(G,\ell)$ be an instance of threshold-completion on split graphs, where $G = (C \cup I, E)$ is a (connected, for simplicity) split graph on $n$ vertices.

Consider the graph $G'$ defined as follows. Let $G_1=(C_1\cup I_1,E_1)$ and $G_2=(C_2\cup I_2,E_2)$ be two copies of $G$. 
For each $i\in \{1,2\}$, we consider $G'_i = (C'_i \cup I_i, E'_i)$, the graph constructed from $G_i$ by connecting \newsuggestion{}{$2n^2$} new vertices to all the vertices of $G_i$. We denote by $V_i$ the vertex set of $G_i$ and by $V'_i$, the vertex set of $G'_i$, for each $i\in \{1,2\}$.
Finally, connect all the vertices of $C'_1$ and $C'_2$ into a clique $C'$. Let \newsuggestion{}{$G'=(V', E')$} be the resulting split graph on \newsuggestion{}{$2(2n^2+n)$} vertices, where $V'=C'\cup I_1 \cup I_2$.

\begin{figure}
    \centering
    \includegraphics[width=0.8\linewidth]{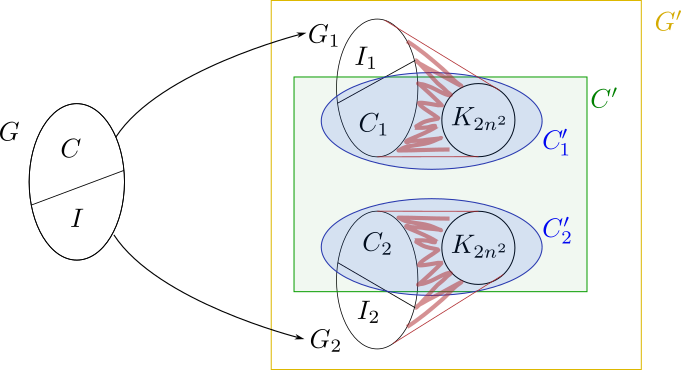}
    \caption{A schema of the gadget used for the reduction of Theorem \ref{theo:2_split}}
    \label{fig:my_label}
\end{figure}

We show that $G$ can be augmented to a threshold graph with at most $\ell$ \newsuggestion{}{fill} edges if and only if $G'$ can be augmented to a proper interval graph with at most $2k$ additional edges, where $k = \ell + \binom{|I|}{2}$.


First, suppose there is a minimum PIG-completion $H=(V' , E'\cup F)$ of $G'$ with $|F|\leq 2k$ edges. Notice that we may assume that $|F|= 2k$, since we can add additional edges if necessary and still keeping the property of being a PIG. One way to show this (assuming that all the endpoints in the interval model are distinct), is by shifting the leftmost interval to the right until it intersects a new interval.
We will show that $G$ can be augmented to a threshold graph with $\ell$ edges.

Observe that completing each $G_i$ into a clique requires less than $n^2$ edges. Thus, \newsuggestion{}{a PIG-completion of $G'$ would need less than $2n^2$ fill edges.} 

Let $\mathbf{I}_H=\{\mathbf{I}_u\}_{u\in V(H)}$ be a proper interval model for the graph $H$. 
By Theorem \ref{theo:1_split}, there is a partition $C'', I_l, I_r$ of $V'$, $C' \subseteq C''$, 
$I_l\cup I_r \subseteq I_1 \cup I_2$ such that $H[I_l]$ and $H[I_r]$ are both cliques with fill edges $F_{I'}$, and $H_l=(C''\cup I_l, E \cup F \setminus F_{I'})$ and $H_r=(C''\cup I_r, E \cup F \setminus F_{I'})$ are threshold graphs.

If there is a vertex $v \in C'' \setminus C'$, w.l.o.g. $v\in I_1$, then we need at least \newsuggestion{}{$2n^2$} fill edges to connect $v$ to $C'_2$ in $H$, which contradicts \newsuggestion{}{$|F|<2n^2$, thus $C''=C'$.} 

Since $C'$ induces a clique in $H$, it follows from the 2-Helly property of intervals that there is a real point $p\in \bigcap_{u \in C'} \mathbf{I}_u$. No other interval of the model $\mathbf{I}_H$ intersects the point $p$, otherwise this requires at least \newsuggestion{}{$2n^2$} fill edges, and thus \newsuggestion{the}{each} interval\newsuggestion{s}{} corresponding to the vertices in $I_1 \cup I_2 $ \newsuggestion{are}{lies} either strictly to the left or to the right of $p$.

\begin{claim} 
    There is no $v_1\in I_1$ and $v_2 \in I_2$  whose corresponding intervals \newsuggestion{}{lie either} both to the right of $p$ or both to the left of $p$. 
\end{claim}
\begin{proof}
    Let $\mathbf{I}_1$ and $\mathbf{I}_2$ be the intervals corresponding to $v_1$ and $v_2$, respectively. Suppose that $\mathbf{I}_1$ and $\mathbf{I}_2$ both lie to the left of $p$. Since neither $\mathbf{I}_1$ nor $\mathbf{I}_2$ intersects $p$, none of them appears to the right of $p$.
    Moreover, since $v_1\cup C'_1 \setminus C_1$ induces a clique in $H$, there is a real point $p_1$ intersecting all the intervals corresponding to this set, and the same holds for $v_2\cup C'_2 \setminus C_2$ and $p_2$. Suppose w.l.o.g. that $p_1 \leq p_2 < p$.
    Then, all the intervals of $C'_1 \setminus C_1$ intersect $p_2$, thus $v_2$ is adjacent to all the vertices of $C'_1 \setminus C_1$ introducing thus \newsuggestion{}{$2n^2$} fill edges, which results in a contradiction.

\end{proof}

Therefore, we assume without loss of generality that, in the interval representation of $H$, the endpoints of the intervals corresponding to vertices of $I_1$ lie strictly \newsuggestion{}{to} the left of $p$ and the \newsuggestion{}{ones corresponding to vertices} of $I_2$ lie strictly to the right \newsuggestion{}{of $p$}.
Since \newsuggestion{$I_l, I_r$}{$\{I_l, I_r\}$} is a partition of $I_1 \cup I_2$ where each set induces a clique in $H$, then $I_1=I_l$ and $I_2=I_r$.
Let us show that all the fill edges lie inside each $G_i$.

\begin{claim} 
    For each $i\in \{1,2\}$, let $F_i$ be the set of fill edges inside $H[V_i]$.
    Then $F = F_1 \cup F_2$.
\end{claim}
\begin{proof} 
    Recall that for each $i\in \{1,2\}$, all the possible edges between $V_i$ and $V'_i\setminus V_i$ are already in \newsuggestion{$E$}{$E'$}
    , thus $F_i$ is also the set of fill edges in $H[V'_i]$.

    Since we already proved that there is no fill edge between $I_1$ and $I_2$, it suffices to see that there is no fill edge between $I_1$ and $C'_2$ (resp.\ $I_2$ and $C'_1$).    
    Toward a contradiction, suppose there is at least one of said fill edges. Let us construct a PIG $H'$ on the same vertex set with edge set $E'\cup F_1 \cup F_2$.

    The model given by $\{\mathbf{I}_u\}_{u\in V'_1}$ is a proper interval model of $H[V'_1]$. It follows from the above reasoning that $C'_1$ can be seen as a first clique in this model, since the point $p$ lies exactly in the intervals that correspond to vertices of $C'$. Similarly, $\{\mathbf{I}_u\}_{u\in V'_2}$ is a proper interval model of $H[V'_2]$ and $C'_2$ is a first clique in this model.
    
    It follows from Lemma~\ref{lem:gluingPIG} that there is a PIG $H'=(V'_1\cup V'_2, E'_1\cup F_1 \cup E'_2 \cup F_2 \cup F')$, where $F'$ is the set of edges connecting all the vertices of $C'_1$ and $C'_2$.
    Notice that $E'_1 \cup E'_2 \cup F' = E'$, hence $H'$ is a PIG-completion of $G'$ with edge set $E'\cup F_1 \cup F_2$.
    Thus, $H'$ is a smaller PIG-completion than $H$, which contradicts the fact that $H$ has minimum number of edges.
\end{proof}

Assume without loss of generality that $|F_1| \le |F_2|$. Let us show that $|F_1| = |F_2|$. We will build a proper interval completion of $G'$ with $2|F_1|$ fill edges, which will then prove that $|F_1| = |F_2|$ since $H$ is a minimum PIG-completion of $G'$.
To do this, consider any proper interval model of $H$, and keep only those images of vertices belonging to $I_1 \cup C'_1$. Now cut the intervals right after point $p$, keeping arbitrarily small parts to the right of $p$ so that the intervals remain non-nested. By doing this we obtain a proper interval model of $H[I_1 \cup C'_1]$ where $C'_1$ is a first clique. Therefore, by considering twice the same graph in Lemma~\ref{lem:gluingPIG}, we get a proper interval graph that is exactly two copies of $H[I_1 \cup C'_1]$, that has all the edges between the two copies of $C'_1$. Note that since $G_1$ and $G_2$ are copies of the same graph, this proper interval graph is a PIG-completion of $G'$, with $2|F_1|$ fill edges, which proves that $|F_1| = |F_2| = k$.

Let $F'_1=\{uv \in F, u,v\in I_1\}$, and let $H_1=(V_1, E\cup F_1 \setminus F'_1)$. Note that since $H[I_1]$ is a clique and $G'[I_1]$ is an independent set, $|F'_1| = \binom{|I|}{2}$. In other words, all the possible edges between vertices of $I_1$ are fill edges. Therefore, $H_1$ has $\ell = k-\binom{|I|}{2}$ fill edges (as a completion of $G_1$). By Lemma~\ref{lemma:2_split}, $H_1$ must be a threshold-completion of $G_1$. Since $G_1$ is isomorphic to $G$, $G$ has a threshold-completion with $\ell$ edges.







\begin{figure}
    \centering
    \includegraphics[width=0.8\linewidth]{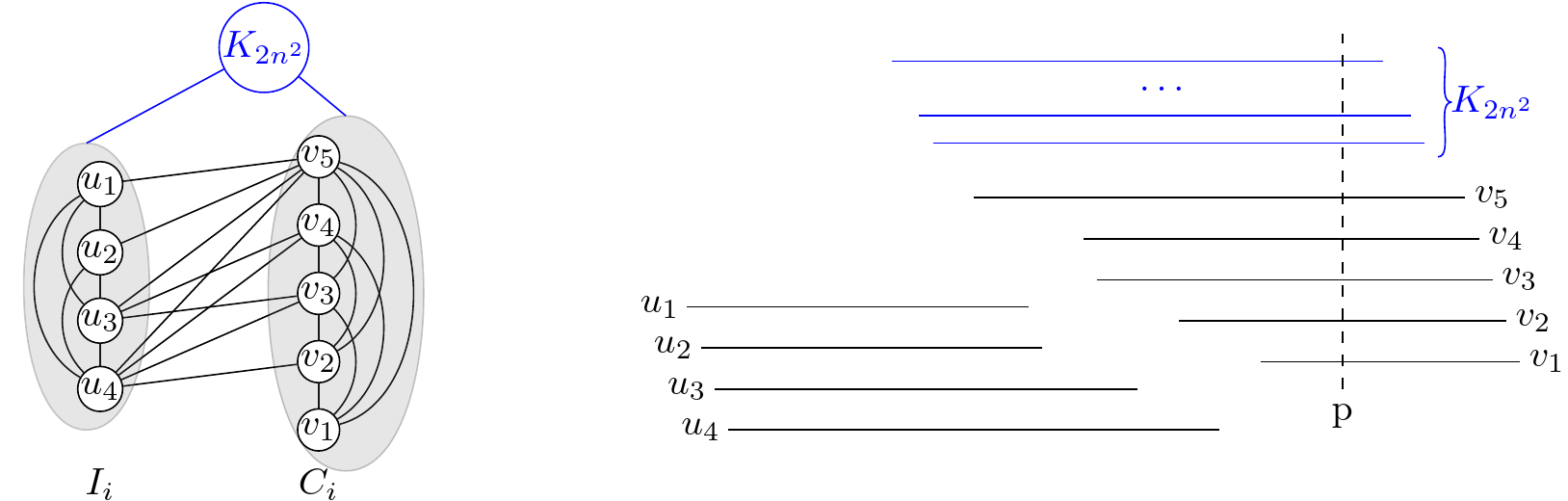}
    \caption{Example of $H''_i$ and its proper interval model with $n=9$.}
    \label{fig:exampleOnlyIf}
\end{figure}

For the only if direction, suppose there is a minimum threshold-completion $H$ of $G$ with \newsuggestion{$k'$}{$\ell$} fill edges. We will construct a PIG-completion of $G'$ with $k'$ fill edges such that $k' \newsuggestion{\leq}{=} 2\left(\ell + \binom{|I|}{2}\right)$. 

Let $F_i$ be the set of fill edges added to the vertices corresponding to each $G_i$ to obtain a threshold graph $H_i$ for each $i \in \{1,2\}$, and let $H'_i = (C'_i\cup I_i, E'_i\cup F_i)$. Notice that $H'_i$ is also a threshold-completion of $G'_i$ for each $i \in \{1,2\}$.
\newsuggestion{}{Observe that we can consider the same partition of the vertices into a clique and an independent set for both $G_i$ and $H_i$. Indeed, if a vertex $v\in I_i$ is in the clique of $H_i$, then we can remove all the edges from $v$ to vertices in the independent set. This way we could place $v$ in the independent set of $H_i$ instead. The same holds for the vertex partition of $G'_i$ and $H'_i$.}
Consider $F'_i$ to be the fill edges obtained by completing $I_i$ into a clique for each $i \in \{1,2\}$, and let $H''_i=(V'_i,E'\cup F_i \cup F'_i)$ for each $i \in \{1,2\}$. This gives a total of $\ell+\binom{|I|}{2}$ fill edges.




Since each $H'_i$ is a threshold graph, the neighbourhoods of the independent vertices are nested and hence we can consider an ordering of said vertices in terms of increasing containment of their neighbourhoods.
Recall that it is possible to represent any clique with a proper interval model by overlapping the corresponding intervals such that each interval starts and ends in a different point.
A proper interval model for $H''_i$ is given as follows, for each $i \in \{1,2\}$. Since $I_i$ is a clique in $H''_i$, then we can place the corresponding intervals such that they overlap. The same holds for the intervals corresponding to all the vertices in $C'_i$, and we can place the endpoints of the intervals corresponding to these vertices by following the ordering of the neighbourhoods of the vertices in $I_i$ to do this. This way, we can place the endpoints of the intervals corresponding to the vertices in $I_i$ according to the increasing ordering given by the neighbourhoods with regards to $C'_i$, and thus obtaining a proper interval model for each $H''_i$\newsuggestion{}{, as in Figure~\ref{fig:exampleOnlyIf}. Observe that $C'_i$ is a first clique in the described proper interval model of $H''_i$ for each $i \in \{1,2\}$}.
Finally, we obtain a PIG-completion of $G'$ with the desired number of fill edges by applying Lemma \ref{lem:gluingPIG}.
\end{proof}

\section{An algorithm for PIG-completion on threshold graphs}\label{sec:threshold}

\suggestion{We present in this section}{In this section we present} a simple linear-time algorithm for computing an optimal PIG-completion for a threshold graph $G$. To do this, we will show first that PIG-completion for threshold graphs is equivalent to co-bipartite-completion. This will enable us to give a procedure, based on the definition of threshold graphs, that iteratively places the vertices in one of the two cliques in an optimal way.
\newsuggestion{Let us}{We} consider the vertices in the \newsuggestion{}{threshold ordering} 
\newsuggestion{such that whenever a vertex is added, it is either isolated or dominating}{}.
For simplicity, we say that a vertex is \emph{dominating} if it is a dominating vertex \newsuggestion{when}{for the current iteration's graph at the moment it is} added, and \emph{isolated} if it is an isolated vertex \newsuggestion{when}{in the current iteration's graph when it is} added. Let us assume that the last vertex we add is dominating\suggestion{(otherwise the graph is not connected, and we apply our result on each connected component)}{, otherwise there are isolated vertices, which are irrelevant to the completion, and hence it is enough to solve the problem for the (single) nontrivial connected component, if it exists}.

\begin{lemma}\label{lemma:cobip_for_t}
\suggestion{}{Let $G$ be a threshold graph.} A minimum PIG-completion of $G$  is the same as a minimum co-bipartite-completion of $G$.
\end{lemma}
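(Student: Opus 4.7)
The plan is to establish the lemma by showing that the set of PIG-completions of $G$ coincides with the set of co-bipartite-completions of $G$; equality of the minimum completion numbers then follows immediately. The structural fact I will exploit is that, since $G$ is a connected threshold graph whose last added vertex is dominating, $G$ possesses a universal vertex $v$, and that threshold graphs are $\{2K_2, C_4, P_4\}$-free.

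For the first direction, I will argue that every PIG-completion $H$ of $G$ is automatically co-bipartite. The universal vertex $v$ of $G$ remains universal in $H$, so any three pairwise nonadjacent vertices of $H$ would together with $v$ induce a claw in $H$, contradicting that $H$ is claw-free. Hence $\alpha(H)\le 2$, and $V(H)$ partitions into at most two cliques, which is exactly the definition of co-bipartite.

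For the reverse direction, I will show that every co-bipartite-completion $H$ of $G$ is already a PIG. Writing $V(H)=A\cup B$ with $A,B$ cliques gives $\alpha(H)\le 2$, which already rules out every forbidden induced subgraph for PIG that contains an independent triple: claw, tent, net, and every induced $C_k$ with $k\ge 5$ (each of which has $\alpha\ge 3$). The only remaining obstruction to $H$ being PIG is an induced $C_4$, and this is where I expect the real work. I plan to use that fill edges only occur inside $A$ or inside $B$, so the two cross-edges and the two cross-non-edges of any induced $C_4$ $x_1x_2x_3x_4$ in $H$ are inherited from $G$; then a short case distinction on whether the two chord-pairs $x_1x_2$ and $x_3x_4$ belong to $G$ will force $G[\{x_1,x_2,x_3,x_4\}]$ to be isomorphic to one of $C_4$, $P_4$, or $2K_2$, contradicting that $G$ is threshold.

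Once both inclusions are in place, the PIG-completions and the co-bipartite-completions of $G$ are the same set of supergraphs, so their minimum cardinalities coincide. The main obstacle, and the step that genuinely uses the threshold hypothesis, is the $C_4$ case analysis; every other piece follows either from the universal vertex or from the $\alpha(H)\le 2$ bound that co-bipartite-ness provides for free.
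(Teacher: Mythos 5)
Your overall strategy coincides with the paper's: the forward direction uses the universal vertex plus claw-freeness to force $\alpha(H)\le 2$, and the reverse direction checks that a minimum co-bipartite completion avoids every obstruction to being a proper interval graph, with the induced $C_4$ as the only case that genuinely needs the threshold hypothesis. Your $C_4$ analysis is correct and is essentially the paper's argument made explicit: the two cross edges of the cycle are inherited from $G$, the two diagonals are non-edges of $G$, and whatever the status of the two within-clique edges, $G[\{x_1,x_2,x_3,x_4\}]$ is isomorphic to one of $C_4$, $P_4$, $2K_2$, each forbidden in a threshold graph (the paper phrases the same conclusion as ``every vertex of the cycle has a neighbour and a non-neighbour in $G[V(C)]$'').

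One step fails as written, though: you dismiss induced cycles $C_k$ with $k\ge 5$ on the grounds that each contains an independent triple, but $\alpha(C_5)=2$, so the $C_5$ case is not covered by your independence-number argument. The correct (and easy) repair is the one the paper uses: in a co-bipartite graph $H$ with cliques $A$ and $B$, an induced cycle $C$ meets each of $A$ and $B$ in a clique of $C$, and an induced cycle of length at least $4$ contains no triangle, so $|V(C)\cap A|\le 2$ and $|V(C)\cap B|\le 2$, forcing $|V(C)|\le 4$. With that fix your proof goes through. A smaller point to make explicit: your assertion that fill edges occur only inside $A$ or inside $B$ holds for a \emph{minimum} co-bipartite completion (a cross fill edge could be deleted while preserving co-bipartiteness and the supergraph property), not for an arbitrary one; so the two families of completions do not literally coincide as sets, but the equality of the two minima --- which is all the lemma asserts --- still follows.
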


\begin{proof}
Since there is a dominating vertex \suggestion{}{in $G$} and the \suggestion{}{PIG-}completion is claw-free, its \suggestion{set of vertices}{vertex set} can be partitioned into two cliques. \suggestion{Assume}{Now, assume} that the vertex set of the completion can be partitioned into two cliques. \suggestion{There is obviously}{Thus, there is} no induced claw or asteroidal triple. Suppose there is an induced cycle $C$ of length at least four. Moreover, notice that $C$ has to be of size four since the graph is co-bipartite, and that two non-incident edges of $C$ are in $G$ (precisely the ones that are not inside a clique). Hence, every vertex in $V(C)$ has at least one neighbour and one non-neighbour in $G[V(C)]$, thus $G[V(C)]$ is not a threshold graph and neither is $G$, which results in a contradiction.
\end{proof}


Thus, it suffices to exhibit an algorithm that computes a minimum completion 
into two cliques, $C_1$ and $C_2$, and thus to exhibit their respective sets of vertices $S_1$ and $S_2$.

Let us consider the following algorithm:
We add the vertices in the given threshold ordering. When we add a vertex as a dominating vertex, we always put it in $S_1$. When we add a vertex as an isolated vertex, we compare the number of isolated vertices remaining to be added with the number of vertices already in $S_1$. If there are more remaining isolated vertices, then we put it in $S_1$. Otherwise, we put it in $S_2$. This very simple algorithm runs in time $O(n)$.
We will prove the following:

\begin{theorem} \label{th_algo_opti}
The previous algorithm gives an optimal partition $(S_1,S_2)$ of $V(G)$.
\end{theorem}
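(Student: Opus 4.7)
The plan is to reformulate the cost via the threshold construction order, analyze it over a one-parameter family of ``canonical'' partitions, and verify that the algorithm finds the optimum of that family.

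First, by Lemma~\ref{lemma:cobip_for_t}, the cost of a partition $(S_1, S_2)$ is the number of non-edges of $G$ with both endpoints in $S_1$ or both in $S_2$. In the construction order $v_1, \dots, v_n$, two vertices $u < v$ are adjacent in $G$ iff $v$ was added as a dominating vertex, so the cost rewrites as
\[
C(S_1, S_2) = \sum_{v \text{ isolated}} \bigl|\{u < v : \sigma(u) = \sigma(v)\}\bigr|,
\]
where $\sigma(w) \in \{1, 2\}$ denotes the side of $w$. Let $s$ be the number of isolated vertices and $\beta_k$ the number of dominating vertices preceding the $k$-th isolated vertex. I call a partition \emph{canonical} if every dominating vertex is in $S_1$ and the isolated vertices in $S_1$ are exactly the first $\alpha$ (in construction order); the canonical cost is then
\[
F(\alpha) = \sum_{k=1}^{\alpha} \beta_k + \binom{\alpha}{2} + \binom{s-\alpha}{2},
\]
and the forward differences $F(\alpha+1) - F(\alpha) = \beta_{\alpha+1} + 2\alpha - s + 1$ are strictly increasing in $\alpha$ (since $\beta_k$ is nondecreasing in $k$), so $F$ is strictly convex.

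Second, I would verify that the algorithm outputs a canonical partition whose parameter is a minimizer of $F$. Note that once the algorithm places an isolated vertex in $S_2$, all subsequent isolated vertices also go to $S_2$, because $R$ strictly decreases while $|S_1|$ never decreases as subsequent vertices are processed. Thus the output has the form ``first $\alpha_{\text{alg}}$ isolated vertices in $S_1$, rest in $S_2$, with all dominating in $S_1$,'' i.e., it is canonical. For the $k$-th isolated vertex (assuming the first $k-1$ were all placed in $S_1$), $|S_1|$ equals $\beta_k + (k-1)$ just before processing; the rule ``$R > |S_1|$'' becomes $\beta_k + 2k \leq s + 1$, which is exactly the condition $F(k) - F(k-1) \leq 0$. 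By strict convexity of $F$, the value $\alpha_{\text{alg}}$ at which this condition first fails is a minimizer of $F$, so the algorithm's cost equals $\min_\alpha F(\alpha)$.

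Finally, to conclude overall optimality, I must show that $\min_\alpha F(\alpha)$ is a lower bound for the cost of every partition. This will come from an exchange argument in two stages: (i) once all dominating vertices lie in $S_1$, the cost becomes $\sum_{i \in I_1} b(i) + \binom{|I_1|}{2} + \binom{|I_2|}{2}$ with $b(v) = |\{d \in D: d < v\}|$; since $b$ is nondecreasing in construction order, for any fixed $|I_1|$ the cost is minimized by taking the first $|I_1|$ isolated vertices as $I_1$; (ii) from any optimal partition, the dominating vertices can be moved to a single side without increasing cost. The main obstacle is stage (ii): moving a single dominating vertex from $S_2$ to $S_1$ can strictly increase the cost, so the argument requires a companion swap with a suitable isolated vertex added after it. Verifying that such paired swaps are cost non-increasing (for instance by choosing the optimal partition maximizing the number of dominating vertices in $S_1$ and deriving a contradiction if that number is not $|D|$) is where most of the technical work lies.
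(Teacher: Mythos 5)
There is a genuine gap. Your cost reformulation (a pair $u<v$ in the construction order is a non-edge iff $v$ is isolated), the convexity analysis of $F(\alpha)$ over canonical partitions, the verification that the algorithm outputs the canonical partition minimizing $F$, and your stage (i) (for a fixed number of isolated vertices per side, taking the earliest ones into $S_1$ is best because $\beta$ is nondecreasing) are all correct — indeed this is a cleaner packaging of the paper's unnumbered counting lemma and of the final step of its proof. But everything hinges on stage (ii), the claim that some optimal partition places all dominating vertices on a single side, and you explicitly leave that unproven. Without it you have only shown that the algorithm is optimal among canonical partitions, not among all partitions, so the theorem is not established. Your diagnosis is accurate: moving one dominating vertex from $S_{o2}$ to $S_{o1}$ changes the cost by $I_1-I_2$ (the numbers of later isolated vertices on each side), which can be positive, so single-vertex moves do not suffice; identifying the obstacle is not the same as overcoming it, and this is precisely where the paper spends Lemmas~\ref{lem_domi1}--\ref{lem_domi3}.

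For comparison, the paper closes this gap not with a paired swap of a dominating vertex and a later isolated vertex (your proposed fix, which is not obviously sufficient since the relevant isolated vertex may not exist or may not compensate), but with a \emph{suffix swap}: processing vertices in construction order while maintaining the invariant that $S_{o1}$ is always at least as large as $S_{o2}$, whenever a dominating vertex $v$ would be placed on the smaller side with $I_1 \ge I_2$, it exchanges \emph{all} vertices from $v$ onward between the two sides, which changes the cost by $(I_2-I_1)(k_1-k_2) \le 0$; afterwards $I_1 \le I_2$ and the single-vertex move of $v$ costs $I_1 - I_2 \le 0$ (Lemmas~\ref{lem_domi1} and~\ref{lem_domi2}). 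The same suffix swap restores the size invariant when an isolated vertex would violate it (Lemma~\ref{lem_domi3}). If you import that argument as your stage (ii), the rest of your proof goes through and is arguably more transparent than the paper's Lemma~\ref{lem_iso}; as written, however, the core exchange argument is missing.
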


\begin{lemma}
The number of edges to add is the sum for each $i \in \{1,2\}$, for each isolated vertex $v$ that is added to $S_i$, of the number of vertices that are in $S_{i}$ at the time $v$ is added (without counting $v$).
\end{lemma}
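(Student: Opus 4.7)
The plan is to translate the fill-edge count into a sum over ordered pairs according to the threshold construction order, and then observe that only pairs whose later endpoint is isolated contribute.

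First, I would unpack what ``edges to add'' means for this algorithm. The algorithm outputs a partition $(S_1,S_2)$ and the completion is obtained by turning each $G[S_i]$ into a clique. Hence the number of fill edges is
\[
\bigl|\{uv : u,v\in S_1,\ uv\notin E(G)\}\bigr|+\bigl|\{uv : u,v\in S_2,\ uv\notin E(G)\}\bigr|,
\]
i.e.\ the total number of non-edges of $G$ whose endpoints lie in the same part.

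Next I would use the threshold construction order (fixed at the start of the section) to characterise the non-edges of $G$. Order the vertices $v_1,\dots,v_n$ as they are added. For indices $j<k$, by construction $v_j v_k\in E(G)$ if and only if $v_k$ is dominating (when added); equivalently, $v_j v_k\notin E(G)$ if and only if $v_k$ is isolated (when added). In particular, the choice of whether $v_j v_k$ is a non-edge depends only on the later vertex $v_k$, not on $v_j$ at all.

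Now I would count non-edges within parts by charging each such non-edge to its later endpoint. For each isolated vertex $v=v_k$, the set of vertices $v_j$ with $j<k$ and $v_j v_k\notin E(G)$ is precisely the set of vertices added strictly before $v$. Among these, the ones that produce a fill edge are exactly those already placed in the same part $S_i$ as $v$ at the moment $v$ is added. Dominating vertices contribute no non-edge with any earlier vertex, hence contribute nothing to this charge. Summing over all isolated vertices $v$ and noting that every same-part non-edge $v_j v_k$ (with $j<k$) is charged exactly once, to $v_k$, we obtain the stated formula. The only subtle point, which I would make explicit, is that every non-edge is charged exactly once so there is no double counting; this is immediate from the uniqueness of the ``later endpoint'' in each ordered pair.
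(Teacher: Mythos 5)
Your proposal is correct and is essentially the paper's argument: the paper builds the completion incrementally, adding for each isolated vertex placed in $S_i$ the edges to all vertices already in $S_i$, which is exactly your charging of each same-part non-edge to its later (necessarily isolated) endpoint. You simply make explicit the observation that $v_jv_k\notin E(G)$ (for $j<k$) iff $v_k$ is isolated, and that no double counting occurs, which the paper leaves implicit.
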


\begin{proof}
One way to build the completion is to consider the vertices in order, and then, whenever we add an isolated vertex in $S_i$, add the edges between $v$ and all of the vertices in $S_i$.
\end{proof}

Let us now consider an optimal partition $(S_{o1},S_{o2})$.

\begin{lemma} \label{lem_domi1}
Let $v$ be a dominating vertex, and suppose that at least as many isolated vertices are placed in $S_{o2}$ after $v$ as in $S_{o1}$. Then, $v$ can be placed in $S_{o1}$ without worsening the solution.
\end{lemma}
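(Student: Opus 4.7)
The plan is to assume $v \in S_{o2}$ (if $v \in S_{o1}$ there is nothing to prove) and to show that the partition obtained by moving $v$ to $S_{o1}$ has fill-edge cost no greater than that of $(S_{o1},S_{o2})$. By Lemma~\ref{lemma:cobip_for_t}, any partition of $V(G)$ into two sets yields a valid PIG-completion, so the only remaining task is to compare the two costs term by term.

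The key tool will be the counting formula of the preceding lemma: the cost equals the sum, over all isolated vertices $u$, of the number of vertices of $u$'s side already present when $u$ is inserted. Any isolated vertex $u$ inserted strictly before $v$ contributes the same amount in both partitions, since $v$ is absent when $u$ is processed. Dominating vertices (including $v$ itself) never contribute a term of their own. Hence the only affected terms come from the isolated vertices inserted after $v$; let $a_1$ (resp.\ $a_2$) denote the number of such vertices placed in $S_{o1}$ (resp.\ $S_{o2}$).

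After the swap, each isolated vertex inserted after $v$ lying in $S_{o1}$ now finds one extra member, namely $v$, already on its side, so its contribution rises by $1$; symmetrically, each such vertex lying in $S_{o2}$ loses one from its count. The net change is therefore $a_1 - a_2$, which is $\leq 0$ by the hypothesis $a_2 \geq a_1$, and the modified partition is no worse than the original. The only step that requires care is checking that no other term of the sum is affected: this follows immediately because the isolated/dominating classification of every vertex is fixed at its insertion time and is independent of the partition, and the counting formula depends solely on same-side membership of previously inserted vertices.
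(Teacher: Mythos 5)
Your argument is correct and is essentially identical to the paper's proof: both move $v$ from $S_{o2}$ to $S_{o1}$ and observe, via the counting formula of the preceding lemma, that the cost changes by $I_1 - I_2 \le 0$, where $I_i$ is the number of isolated vertices after $v$ in $S_{oi}$. Your version merely spells out the bookkeeping (which terms of the sum are affected) that the paper leaves implicit.
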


\begin{proof}
For $i \in \{1,2\}$, let $I_i$ be the number of isolated vertices \suggestion{}{added} after $v$ in $S_{oi}$. If $v$ is in $S_{o2}$, then moving $v$ to $S_{o1}$ increases the solution cost by $I_1 - I_2 \le 0$.
\end{proof}

\begin{lemma} \label{lem_domi2}
Let $v$ be a dominating vertex, and suppose that when we add $v$ there are more vertices in $S_{o1}$ than in $S_{o2}$. Then, there is an optimal solution with the same partition up to $v$, such that $v$ is in $S_{o1}$.
\end{lemma}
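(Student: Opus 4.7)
The approach is to reduce to Lemma~\ref{lem_domi1}. If the given optimal solution $(S_{o1},S_{o2})$ already has $v\in S_{o1}$, there is nothing to prove. Otherwise $v\in S_{o2}$, and writing $I_i$ for the number of isolated vertices placed in $S_{oi}$ strictly after $v$, it is enough to establish $I_1\le I_2$: Lemma~\ref{lem_domi1} then shows that moving $v$ alone from $S_{o2}$ to $S_{o1}$ does not worsen the cost. Since $(S_{o1},S_{o2})$ was optimal, the new partition is also optimal; it agrees with $(S_{o1},S_{o2})$ on every vertex added before $v$, and has $v\in S_{o1}$, as required.

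To prove $I_1\le I_2$, I would compare $(S_{o1},S_{o2})$ with the \emph{flipped} partition $\pi'$ that agrees with $(S_{o1},S_{o2})$ on every vertex added before $v$ and swaps the two sides for $v$ and for every vertex added after $v$. Let $a=|S_{o1}|$ and $b=|S_{o2}|$ immediately before $v$ is added, so that $a>b$ by hypothesis. Using the cost formula stated earlier in this section, a direct computation shows that each isolated vertex $w$ added after $v$ changes its cost contribution by exactly $b-a$ if $w\in S_{o1}$ in $\pi$, and by $a-b$ if $w\in S_{o2}$. The reason is that the dominating and isolated vertices interleaved between $v$ and $w$ are flipped by $\pi'$ in parallel with $w$ itself, so their contributions to the size of $w$'s clique at the moment $w$ is added cancel in the difference; the only surviving effect is the swap $a\leftrightarrow b$ of the pre-$v$ sizes, together with a $+1$ accounting for the new side of $v$.

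Summing over all post-$v$ isolated vertices yields a total cost change of $(a-b)(I_2-I_1)$ when passing from $\pi$ to $\pi'$. If $I_1>I_2$, this change is strictly negative because $a>b$, contradicting the optimality of $(S_{o1},S_{o2})$; hence $I_1\le I_2$ and Lemma~\ref{lem_domi1} completes the argument. The main obstacle is the bookkeeping of the cost change: one must track carefully the size of each post-$v$ isolated vertex's clique in both $\pi$ and $\pi'$ to see that all intermediate contributions cancel and only the disparity $a-b$ remains. The uniform nature of the flip from time $v$ onward is precisely what guarantees this cancellation, and it is why the flipped partition, rather than a more local swap involving only $v$ and one other vertex, is the right comparison object.
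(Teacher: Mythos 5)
Your proof is correct and follows essentially the same route as the paper: both rest on the exchange of all vertices from $v$ onward, the cost-change computation $(k_1-k_2)(I_2-I_1)$, and an appeal to Lemma~\ref{lem_domi1}. The only (cosmetic) difference is that the paper performs the exchange to reduce to the case $I_1 \le I_2$, whereas you use the same computation as a contradiction to show $I_1 \le I_2$ already holds in the given optimal solution.
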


\begin{proof}
For $i \in \{1,2\}$, let $k_i$ be the number of vertices in $S_{oi}$ that were added before $v$, and let $I_i$ be the number of isolated vertices that were added after $v$ in $S_{oi}$. Notice that $k_1 \ge k_2$.
If $I_1 \ge I_2$, then by swapping between the sets $S_{o1}$ and $S_{o2}$ all the vertices from $v$ on, we increase the solution cost by $(I_2 - I_1)(k_1 - k_2) \le 0$. \suggestion{Now}{Hence, we may assume w.l.o.g that} $I_1 \le I_2$, and thus by Lemma~\ref{lem_domi1} we know we can place $v$ in $S_{o1}$ if it is not there already.
\end{proof}

\begin{lemma} \label{lem_domi3}
There is an optimal solution such that every dominating vertex is put in the same clique, say $S_{o1}$. Moreover, in this solution there are always at least as many vertices in $S_{o1}$ as in $S_{o2}$.
\end{lemma}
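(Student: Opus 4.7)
The plan is to take an arbitrary optimal solution and, through a sequence of cost-preserving moves relying on Lemmas~\ref{lem_domi1} and~\ref{lem_domi2}, transform it into one where every dominating vertex lies in a single part (which we will call $S_{o1}$) and where $|S_{o1}(t)| \ge |S_{o2}(t)|$ holds at every step $t$ of the construction process.

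First, I would process the dominating vertices in insertion order $v_1, v_2, \dots, v_d$. After an initial relabeling so that $v_1 \in S_{o1}$, I would inductively ensure that (i) $v_1, \dots, v_{j-1}$ all lie in $S_{o1}$ and (ii) $|S_{o1}(t)| \ge |S_{o2}(t)|$ for every time $t$ up to the moment just before $v_j$ is added. The size invariant (ii) gives $k_1 \ge k_2$ when $v_j$ is introduced, so Lemma~\ref{lem_domi2} (in the strict case $k_1 > k_2$) places $v_j$ in $S_{o1}$ without worsening the solution; the equality case $k_1 = k_2$ follows from exactly the same proof, since the suffix-swap cost change $(I_2-I_1)(k_1-k_2)$ vanishes, or alternatively from Lemma~\ref{lem_domi1} when $I_1 \le I_2$. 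Adding $v_j$ to $S_{o1}$ preserves the invariant at time $t_{v_j}$, since it increases $|S_{o1}|$ by one.

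The delicate point, which I expect to be the main obstacle, is establishing $|S_{o1}(t)| \ge |S_{o2}(t)|$ at times $t$ corresponding to \emph{isolated} vertices, because neither Lemma~\ref{lem_domi1} nor~\ref{lem_domi2} directly applies to isolated vertices. My plan is to argue that if an isolated vertex $u$ is placed in $S_{o2}$ at some time $t$ where $|S_{o1}(t-1)| = |S_{o2}(t-1)|$, then $u$ can be re-routed into $S_{o1}$ and compensated by swapping it with a suitably chosen later isolated $S_{o1}$-vertex (if one exists); using the balanced-size condition at the swap point, one verifies that the exchange preserves the total cost. Iterating the construction and this exchange finitely many times produces an optimal solution in which both properties of the lemma hold simultaneously.
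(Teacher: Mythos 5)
Your overall strategy (process vertices in insertion order, maintain the invariant that $S_{o1}$ is never smaller than $S_{o2}$, and invoke Lemma~\ref{lem_domi2} whenever a dominating vertex threatens to land in $S_{o2}$) is the same as the paper's. The gap is in your handling of an isolated vertex $u$ that the current solution places in $S_{o2}$ at a moment where the two parts have equal size $k$. Your proposed repair --- move $u$ to $S_{o1}$ and compensate by sending one later isolated $S_{o1}$-vertex $v'$ to $S_{o2}$ --- is not cost-preserving in general. Computing the change: $u$ itself still costs $k$; every isolated vertex of $S_{o2}$ strictly between $u$ and $v'$ costs one less; and $v'$ now costs $s_2-1$ instead of $s_1$, where $s_1,s_2$ are the part sizes just before $v'$. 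Taking $v'$ to be the first later isolated $S_{o1}$-vertex, these contributions sum to $d_2-d_1$, where $d_i$ is the number of \emph{dominating} vertices between $u$ and $v'$ currently sitting in $S_{oi}$. Since the suffix has not yet been normalized, nothing prevents $d_2>d_1$, and then your exchange strictly increases the cost. The "balanced-size condition at the swap point" only controls $u$'s own contribution, not the window between $u$ and $v'$.

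This is in fact a circularity: your single-vertex exchange is essentially the argument of Lemma~\ref{lem_iso}, whose cost bound relies on the identity $k_2'+I_2'=I_2+1$, which holds precisely because every vertex of $S_{o2}$ is isolated --- i.e., because the solution already satisfies Lemma~\ref{lem_domi3}, the statement you are trying to prove. The paper's fix is both simpler and immune to this problem: when the two prefixes have equal size and an isolated vertex is about to enter $S_{o2}$, exchange the \emph{entire remaining suffix} between $S_{o1}$ and $S_{o2}$. Since each isolated vertex pays the current size of the part it joins and the two prefixes have the same size $k$, this global swap merely relabels the two parts from that point on and leaves the total cost unchanged; any dominating vertices thereby pushed into $S_{o2}$ are dealt with later, when the sweep reaches them, via Lemma~\ref{lem_domi2}. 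You should replace your single-vertex exchange with this full suffix swap (or otherwise prove that the configuration $d_2>d_1$ cannot occur in an optimal solution, which your proposal does not attempt).
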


\begin{proof}
Consider an optimal solution and its construction according to the ordering of the vertices. Let us transform this construction into another optimal solution.

We will do this maintaining that, in every step there is at least as many vertices in $S_{o1}$ as in $S_{o2}$, and that every dominating vertex is placed into $S_{o1}$. Whenever the solution places a dominating vertex in $S_{o2}$, we modify the solution by Lemma~\ref{lem_domi2}. If the solution places an isolated vertex $v$ in $S_{o2}$ while there are as many vertices in $S_{o1}$ as in $S_{o2}$, then swap all the remaining vertices (including $v$) between $S_{o1}$ and $S_{o2}$, thus we obtain a solution with the same cost.
\end{proof}

Let us now consider an optimal solution as in Lemma~\ref{lem_domi3}.

\begin{lemma} \label{lem_iso}
Let $v$ be an isolated vertex.
Suppose that every vertex before $v$ is in $S_{o1}$ and that there are at least as many remaining isolated vertices as there are vertices before $v$.
Then there is another optimal solution respecting Lemma~\ref{lem_domi3} with the same partition up to $v$ such that $v$ is in $S_{o1}$.
\end{lemma}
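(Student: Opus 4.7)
My plan is an exchange argument. If $v$ already lies in $S_{o1}$ there is nothing to do, so assume $v \in S_{o2}$. Since every vertex before $v$ lies in $S_{o1}$, we have $|S_{o2}| = 0$ just before $v$ is added, so placing $v$ there contributes nothing to the cost. Write $s_v$ for the number of vertices before $v$ (so $|S_{o1}| = s_v$ at that instant) and $r_v$ for the number of isolated vertices strictly after $v$; the hypothesis reads $r_v \ge s_v$. I would split according to whether some later isolated vertex currently sits in $S_{o1}$.

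If none does, all $r_v$ later isolated vertices lie in $S_{o2}$ and I simply move $v$ to $S_{o1}$. The cost of $v$ grows by $s_v$, but each of the $r_v$ later isolated vertices then finds $|S_{o2}|$ smaller by one and saves an edge. The hypothesis $r_v \ge s_v$ is exactly what makes the net change $s_v - r_v$ nonpositive, so optimality is preserved. Dominating vertices are untouched and $|S_{o1}| - |S_{o2}|$ only widens from $v$ onward, so Lemma~\ref{lem_domi3} continues to hold.

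Otherwise, let $u$ be the first isolated vertex after $v$ sitting in $S_{o1}$, and swap the assignments of $v$ and $u$, leaving every other vertex where it was. Between $v$ and $u$ lie $d$ dominating vertices (kept in $S_{o1}$) and $i$ isolated vertices which, by the choice of $u$, are all in $S_{o2}$. A direct bookkeeping of the contributions of $v$, of $u$, and of the $i$ intermediate vertices (whose $|S_{o2}|$ counter drops by one once $v$ leaves that side) gives an overall cost change of $-d \le 0$; after $u$, the two clique sizes again coincide with the original solution, so later contributions are unchanged. Dominating vertices remain in $S_{o1}$, and the gap $|S_{o1}| - |S_{o2}|$ only widens on the open interval between $v$ and $u$ and is unchanged elsewhere, so Lemma~\ref{lem_domi3} is preserved.

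The main subtlety of the argument is the choice of $u$ as the \emph{first} later isolated vertex in $S_{o1}$: this is what forces the intermediate isolated vertices to all lie in $S_{o2}$ and yields the clean cancellation that turns the cost change into the single term $-d$. In both branches the partition strictly before $v$ is left untouched, $v$ ends up in $S_{o1}$, optimality is preserved, and Lemma~\ref{lem_domi3} still holds, which is exactly the conclusion of the lemma.
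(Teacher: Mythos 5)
Your proof is correct and follows essentially the same route as the paper: if no later isolated vertex sits in $S_{o1}$ you move $v$ across and use the hypothesis $r_v \ge s_v$, and otherwise you swap $v$ with the first later isolated vertex of $S_{o1}$ (the paper's $v'$), where your direct bookkeeping yields exactly the paper's cost change $m = -d \le 0$. Your in-place verification that Lemma~\ref{lem_domi3} is preserved is in fact slightly cleaner than the paper's appeal to re-running the proof of that lemma from $v$ on.
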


\begin{proof}
Assume that $v$ is in $S_{o2}$. For $i \in \{1,2\}$, let $I_i$ be the number of isolated vertices after $v$ in $S_{oi}$. 
Let $k$ be the number of vertices that are before $v$ in the order. Note that they are all in $S_{o1}$.

Moving $v$ from $S_{o2}$ to $S_{o1}$ increases the number of edges in the solution by $I_1 + k - I_2$. By hypothesis, $k \le (I_2 + I_1)$, so if $I_1 = 0$, then $I_1 + k - I_2 \le 0$, and we get another optimal solution respecting Lemma~\ref{lem_domi3}.

Now we may assume that $I_1 \ne 0$. Let $v'$ be the first isolated vertex put in $S_{o1}$ after $v$. For $i \in \{1,2\}$, let $k'_i$ be the number of vertices in $S_{oi}$ before $v'$, and let $I'_i$ be the number of isolated vertices after $v'$ in $S_{oi}$. Note that since our solution respects Lemma~\ref{lem_domi3}, every vertex in $S_{o2}$ is isolated, so $k_2' + I_2' = I_2+1$.

By putting $v$ in $S_{o1}$ and $v'$ in $S_{o2}$ we increase the cost of the solution by $m = I_1 + k - I_2 + (k_2' - 1 + I_2' - (k_1' + 1) - I_1')$. As noted previously, $k_2' + I_2' = I_2 + 1$, so $m = I_1 + k -k_1' - I_1' - 1$.
Moreover, as every isolated vertex after $v$ in $S_{o1}$ is either a vertex before $v'$, or $v'$, or an isolated vertex after $v'$, we get that $k_1' + I_1' + 1\ge k + I_1$, so $m \le 0$. Therefore we have a solution with the same partition up to $v$ and such that $v$ is in $S_{o1}$. Applying the proof of Lemma~\ref{lem_domi3} from $v$ on, we can change it into an optimal solution respecting Lemma~\ref{lem_domi3}, with the same partition up to $v$ and such that $v$ is in $S_{o1}$.
\end{proof}

\begin{proof}[Proof of Theorem~\ref{th_algo_opti}]
By iterating Lemma~\ref{lem_iso}, we obtain an optimal solution such that every vertex of $S_1$ (from our solution) is in $S_{o1}$. Suppose that there is a vertex $v$ in $S_2 \cap S_{o1}$. By construction, it holds that there are more vertices other than remaining isolated vertices in $S_{o1}$ before this vertex $v$. It follows that moving $v$ from $S_{o1}$ to $S_{o2}$ improves the solution, which results in a contradiction. Therefore, the optimal solution $(S_{o1},S_{o2})$ is equal to the solution $(S_1,S_2)$ of Theorem~\ref{th_algo_opti}.
\end{proof}

To conclude this section, we mention the connection to the max-cut problem in the case of threshold graphs. Given a partition $(A,B)$ of the vertices of a graph $G$, consider these two sets:
\begin{enumerate}
\item The set of pairs of non-adjacent vertices, where one element is in $A$ and the other in $B$. Let us call it $C$.
\item The set of pairs of non-adjacent \newsuggestion{}{vertices} where both are in $A$ or both in $B$. Let us call it $F$ (for threshold graphs, this is a completion to PIG).
\end{enumerate}
Note that $E(G)$, $C$, and $F$ are a partition of the set of pairs of $V(G)$, hence $|E(G)| + |C| + |F| = {n \choose 2}$. Two of these four values depend only on the input graph, not on the partition. Hence, choosing a partition minimizing $|F|$ is the same as maximizing $|C|$. Now take the complement graph and consider the same partition. $C$ is clearly a cut between $A$ and $B$. So for threshold graphs, minimizing the size of the completion is equivalent to finding the maximum cut of the complement.\\
We remark that an $O(n^2)$-algorithm for calculating max-cut for cographs (and thus for threshold graphs) has been presented in \cite{Bodlaender2000}. This result, together with the fact that threshold graphs are closed under complementation, gives an alternative algorithm for PIG-completion for this class (albeit a less efficient one).

\subsection{Completion from quasi-threshold graphs}\label{sec:quasi-threshold}

\suggestion{When trying to generalize the previous results, a graph class related to threshold graphs arises as a natural continuation: quasi-threshold graphs are precisely $\{P_4, C_4\}$-free graphs.}{A natural candidate for generalizing the previous result \newsuggestion{}{is the class of}  quasi-threshold graphs. Recall that these are precisely the $\{P_4, C_4\}$-free graphs.} However, for this particular class we \suggestion{stumble upon with the fact}{encounter the problem} that a crucial result from the previous section does not hold: a minimum co-bipartite-completion is not the same as a minimum PIG-completion.
We present a counterexample for this in Figure~\ref{fig:counterexample_qt}, in which we can see that a minimum co-bipartite-completion of the graph has an induced $C_4$. 

\begin{figure}[h!]
    \centering
    \includegraphics[scale=.22]{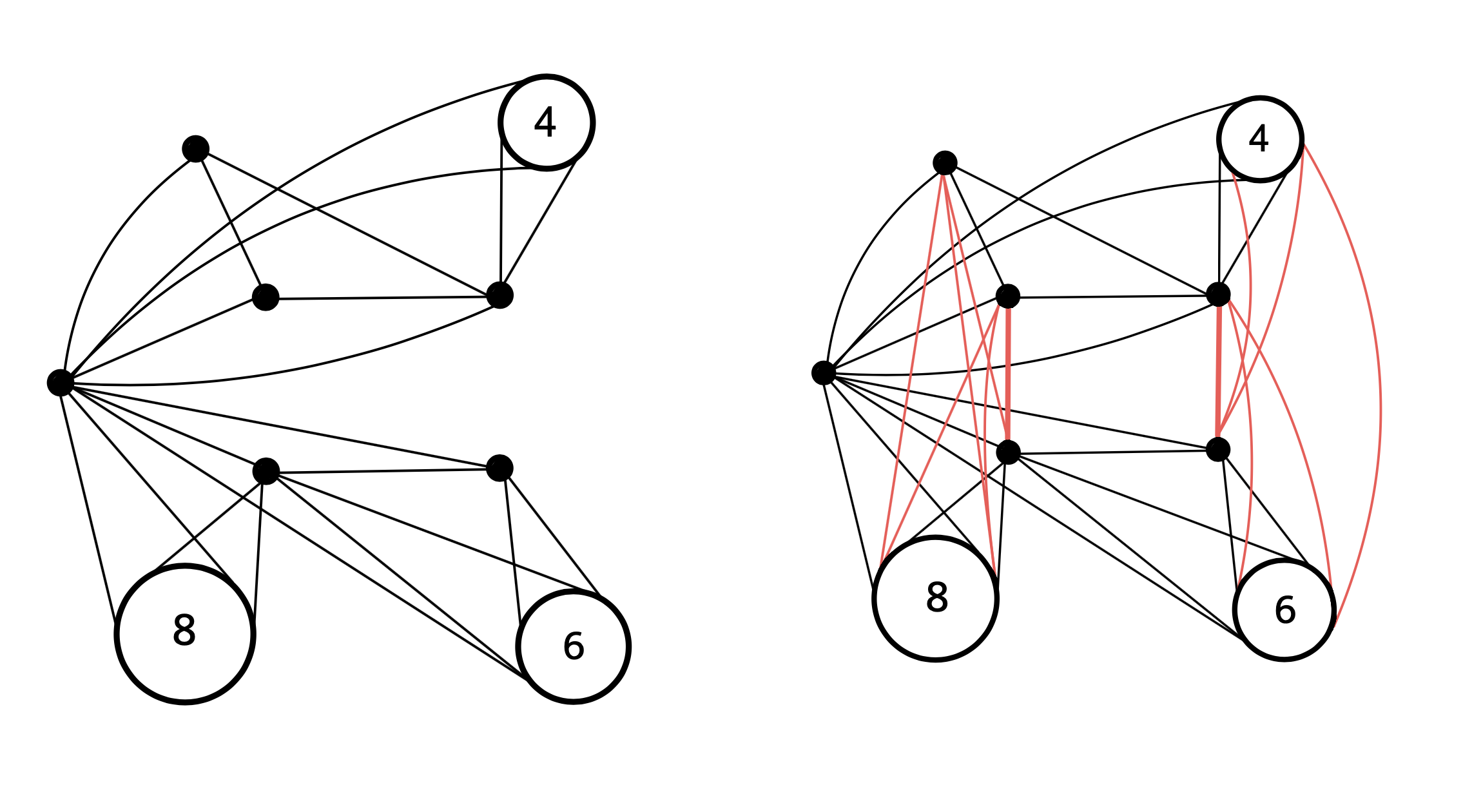}
    \caption{A quasi-threshold graph $G$ and a minimum co-bipartite-completion of $G$ that is not PIG. \suggestion{}{Numbered circles correspond to cliques of size equal to the number inside it}}
    \label{fig:counterexample_qt}
\end{figure}

Nevertheless, we still can find an algorithm that solves the minimum co-bipartite-completion problem when the input is a quasi-threshold graph. Notice that, even though this does not \suggestion{}{necessarily} provide a minimum PIG-completion for a quasi-threshold graph, it is indeed a lower bound for it. Recall that every connected quasi-threshold graph has a dominating vertex, and that any PIG-completion in particular yields a claw-free supergraph of the input graph. The fact that a minimum  co-bipartite-completion gives a lower bound for the minimum PIG-completion follows \suggestion{as a result of}{from} these two remarks.

Let \newsuggestion{$G = (V, E(G))$}{$G = (V, E_G)$} be a connected quasi-threshold graph and \newsuggestion{$T =(V, E(T))$}{$T =(V, E_T)$} be the tree rooted in $r$ that stems from its definition. 

For $v \in V$, let $n_v = |V(T_v)|$ be the number of vertices in $T_v$, the maximal subtree rooted at $v$, and  let $c_v$ be the number of children of $v$. 

For $v \in V$, let $v_1, \ldots, v_{c_v}$ be the children of $v$ in $T$, $0 \leq i \leq c_v$, $0 \leq j < n_v$. Let $X_{i,v} = \bigcup_{w \in \{v_1, \ldots, v_i\}} V(T_w)$ and $x_{i,v} = |X_{i,v}| = \sum_{1 \leq k \leq i} n_{v_k}$. We define $C(v,i,j)$ as the minimum number of edges in a co-bipartite-completion in the subgraph induced by $X_{i,v}$ such that one clique has cardinality $j$ and the other has cardinality $x_{i,v}-j$. Note that $v \not\in X_{i,v}$.

Finally, we define $D(v, j)$ as the minimum number of edges in a co-bipartite-completion of the subgraph induced by $V(T_v)$ such that one of the cliques has cardinality $j$. 

\begin{proposition} \label{prop:Dv0}
$D(v, 0) = C(v, c_v, n_v-1)$ and for $1 \leq j \leq n_v$, it holds that $D(v, j) = C(v, c_v, j-1)$.
\end{proposition}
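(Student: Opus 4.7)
The proof rests on a single structural observation: in $G$, the vertex $v$ is adjacent to every vertex in $V(T_v)\setminus\{v\} = X_{c_v,v}$. Indeed, for any descendant $u$ of $v$ in $T$, the unique path in $T$ from the root $r$ to $u$ passes through $v$, so by the tree characterisation of quasi-threshold graphs we have $uv\in E(G)$. Hence $v$ is a dominating vertex of $G[V(T_v)]$, which means that in any co-bipartite completion of $G[V(T_v)]$ no fill edge is ever incident to $v$: $v$ is already adjacent to every other vertex, and so can be placed in either side of the co-bipartite partition at zero extra cost.

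First I would handle the case $1 \leq j \leq n_v$. Using the natural tree-DP convention that in $D(v,j)$ the clique of size $j$ is the one containing $v$, any co-bipartite completion of $G[V(T_v)]$ realising $D(v,j)$ is specified by choosing a subset $S \subseteq X_{c_v,v}$ with $|S| = j-1$ (these, together with $v$, form the clique of size $j$), while the remaining $x_{c_v,v}-(j-1) = n_v - j$ vertices of $X_{c_v,v}$ form the other clique. Since $v$ contributes no fill edge, the total cost is exactly the number of fill edges needed to turn $S$ and $X_{c_v,v}\setminus S$ into cliques inside $G[X_{c_v,v}]$. Minimising over the choice of $S$ with $|S|=j-1$ is, by definition of $C$, equal to $C(v,c_v,j-1)$. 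The correspondence runs in both directions (any completion of $X_{c_v,v}$ with clique sizes $(j-1, n_v-j)$ extends for free to $V(T_v)$ by placing $v$ in the side of size $j-1$), so $D(v,j) = C(v,c_v,j-1)$.

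Next, for $j = 0$, the condition forces one of the two cliques to be empty, so all $n_v$ vertices of $V(T_v)$ lie in a single clique. The required fill edges are precisely those needed to turn $G[X_{c_v,v}]$ into a clique on $n_v-1$ vertices (edges incident to $v$ are already in $G$), which by definition is $C(v,c_v,n_v-1)$; this agrees with the value $D(v,n_v) = C(v,c_v,n_v-1)$ one gets from the previous paragraph, as expected by the symmetry between the two cliques of a co-bipartite partition.

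I do not anticipate a substantial obstacle. The only mildly delicate point is fixing the convention that $j$ counts the size of the clique containing $v$ in $D(v,j)$ (with $j=0$ corresponding to the degenerate case where the other clique is empty); once this is fixed, the proof is essentially a bijection between the two sides of each identity, powered solely by the fact that $v$ is universal in $G[V(T_v)]$.
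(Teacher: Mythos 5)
Your proposal is correct and follows the same route as the paper: the entire argument rests on the observation that $v$ is adjacent to every vertex of $X_{c_v,v}$ (i.e., universal in $G[V(T_v)]$), so it can be placed in either clique at no cost, reducing $D(v,\cdot)$ to $C(v,c_v,\cdot)$. The paper states this in one line; you additionally spell out the bijection and make explicit the convention that $j$ counts the clique containing $v$, which is a reasonable (and arguably necessary) clarification but not a different proof.
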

\begin{proof}
Note that $V(T_v) = X_{c_v, v} \cup \{v\}$, and that $v$ is adjacent to every vertex in $X_{c_v, v}$, hence we can always add it to any clique without increasing the cost of the completion.
\end{proof}

It follows from those definitions that the value we are looking for is given by
\begin{equation}
\min_{0 \leq j \leq |V(G)|} D(r,j). \label{D_rj}
\end{equation}

We now show how to compute $C(v,i,j)$, for every $v$ and all possible values if $0 \leq i \leq c_v$ and $0 \leq j \leq x_{i,v}$. Once again, we assume that the children of $v$, if any, are $v_1, \ldots, v_{c_v}$.

\begin{equation}
  C(v,i,j)=\begin{cases}
    0,  \text{ if $v$ is a leaf or $i=0$}.\\
    \min\limits_{j-n_{v_i} \leq k \leq j} C(v, i-1, k) + D(v_i, j-k) + k(j-k) + (x_{i-1,v}-k)(n_{v_i}-j+k),  \text{ otherwise}.
  \end{cases}
  \label{recurrence}
\end{equation}

\begin{proposition} \label{prop:recurrence}
Equation~(\ref{recurrence}) computes $C(v,i,j)$ correctly.
\end{proposition}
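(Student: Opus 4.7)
The plan is to prove the recurrence by induction on $i$ (and simultaneously on the height of $v$), establishing a two-way correspondence between completions of $X_{i,v}$ with a prescribed clique size $j$ and pairs of completions of $X_{i-1,v}$ and $T_{v_i}$ with matching sizes, and showing that the cost decomposes cleanly.

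First I would dispatch the base cases. If $v$ is a leaf then $c_v = 0$, so $X_{i,v} = \emptyset$ forces the only meaningful value to be $C(v,0,0) = 0$. Likewise, when $i=0$ we have $X_{0,v} = \emptyset$, so no fill edges are needed. In both cases, the formula returns $0$, as required.

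The heart of the argument is the inductive step. I would start from the crucial structural fact derived from the rooted-tree representation of connected quasi-threshold graphs: two vertices are adjacent in $G$ if and only if one is an ancestor of the other in $T$. Since $X_{i-1,v} \subseteq \bigcup_{k<i} V(T_{v_k})$ and $T_{v_i}$ consists of descendants of $v_i$, no vertex of $X_{i-1,v}$ is comparable in $T$ with any vertex of $T_{v_i}$, hence there is no edge of $G$ between $X_{i-1,v}$ and $V(T_{v_i})$. Consequently, in any co-bipartite completion $H$ of $G[X_{i,v}]$ with cliques $Q_1, Q_2$ of sizes $j, x_{i,v}-j$, every edge of $H$ between $X_{i-1,v}$ and $V(T_{v_i})$ is a fill edge. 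Letting $k := |Q_1 \cap X_{i-1,v}|$, we get $|Q_1 \cap V(T_{v_i})| = j-k$, and the cost of $H$ splits as a sum of three independent contributions: the fill cost inside $X_{i-1,v}$ (which is at least $C(v,i-1,k)$), the fill cost inside $V(T_{v_i})$ (which is at least $D(v_i, j-k)$), and the fill cost across the bipartition $(X_{i-1,v}, V(T_{v_i}))$, which equals exactly $k(j-k) + (x_{i-1,v}-k)(n_{v_i}-(j-k))$ because all cross-pairs in the same clique need a fill edge. The range $j - n_{v_i} \le k \le j$ together with the implicit $0 \le k \le x_{i-1,v}$ corresponds exactly to the feasible values $0 \le j-k \le n_{v_i}$ and $0 \le k \le x_{i-1,v}$. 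This shows $C(v,i,j)$ is at least the minimum on the right-hand side of~\eqref{recurrence}.

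For the reverse inequality, I would note that any choice of $k$ in the admissible range, together with an optimal completion of $G[X_{i-1,v}]$ witnessing $C(v,i-1,k)$ and an optimal completion of $G[V(T_{v_i})]$ witnessing $D(v_i, j-k)$, can be combined by adding all missing edges across the two cliques to yield a valid co-bipartite completion of $G[X_{i,v}]$ with clique sizes $j$ and $x_{i,v} - j$, whose cost is exactly the corresponding summand. Taking the minimum over $k$ gives the desired upper bound, and combined with the lower bound, equality~\eqref{recurrence} holds. The main obstacle is the clean separability of the cost, and that rests entirely on the non-adjacency between $X_{i-1,v}$ and $T_{v_i}$ in $G$; once that is observed, the bookkeeping of the cross fill edges is a direct count, and both inequalities follow without further subtlety.
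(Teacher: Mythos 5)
Your proof follows essentially the same approach as the paper's: both decompose a completion of $G[X_{i,v}]$ according to $k = |C_1 \cap X_{i-1,v}|$ and account for the cost as the two recursive subproblem costs plus the two products counting cross fill edges. Your version is in fact slightly more complete, since you explicitly justify (via the ancestor--descendant characterization of adjacency in the rooted tree $T$) that there are no edges of $G$ between $X_{i-1,v}$ and $V(T_{v_i})$, which is the fact the paper uses implicitly when it counts exactly $k(j-k) + (x_{i-1,v}-k)(n_{v_i}-j+k)$ cross edges, and you spell out both inequalities rather than only the existence direction.
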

\begin{proof}
For the first case, the graph is either trivial or empty.
For the second case, let $C_1$ and $C_2$ be the cliques of a completion obtained after a completion of $X_{i,v}$ such that $|C_1| = j$, and let $k = |X_{i-1,v} \cap C_1|$. 
It follows that $|V(T_{v_i}) \cap C_1| = j-k$. Notice that we need $C(v, i-1, k)$ edges to turn each of $X_{i,v} \cap C_1$ and $X_{i,v} \cap C_2$ into cliques. Similarly, we need $D(v_i, j-k)$ edges to turn $V(T_{v_i}) \cap C_1$ and $V(T_{v_i}) \cap C_2$ into cliques. Finally, we need $k(j-k)$ edges to connect $X_{i,v} \cap C_1$ and $V(T_{v_i}) \cap C_1$, and $(x_{i-1,v}-k)(n_{v_i}-j+k)$ edges to connect $X_{i,v} \cap C_2$ and $V(T_{v_i}) \cap C_2$. Since we try all possible values of $k$, we eventually find the smallest possible completion.

\end{proof}

\noindent We arrive thus at the main result of this section, which we state below:

\begin{theorem} Let $G$ be a quasi-threshold graph. There exists an $O(n^4)$ dynamic programming algorithm that computes the minimum co-bipartite-completion number for $G$.
\end{theorem}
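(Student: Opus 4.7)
The proof is essentially a packaging of the recurrences already established, together with a running-time analysis, so my plan is to present it as a constructive algorithm and then bound its cost.

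First, I would describe the order of computation. Process the vertices of the tree $T$ in post-order, so that when we reach a vertex $v$, the tables $C(w,\cdot,\cdot)$ and $D(w,\cdot)$ are already filled for every descendant $w$ of $v$. For a leaf $v$ set $C(v,0,0)=0$ and $D(v,0)=D(v,1)=0$. Otherwise, for $i=1,\ldots,c_v$ and each $0\leq j\leq x_{i,v}$ compute $C(v,i,j)$ using the recurrence~\eqref{recurrence}, which is legitimate because it only queries $C(v,i-1,\cdot)$ (already computed at $v$) and $D(v_i,\cdot)$ (already computed in the previous step of the post-order traversal). Finally set $D(v,\cdot)$ via Proposition~\ref{prop:Dv0}. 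The correctness of the whole procedure follows directly from Proposition~\ref{prop:recurrence} and Proposition~\ref{prop:Dv0}, while the fact that we return $\min_{0\leq j\leq n} D(r,j)$ as the answer is justified by~\eqref{D_rj}.

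Next, I would address the running time. The table has entries of the form $C(v,i,j)$ with $v\in V$, $0\leq i\leq c_v$, and $0\leq j\leq x_{i,v}\leq n$. Counting crudely, the triple $(v,i,j)$ ranges over at most $n\cdot n\cdot n=n^3$ possibilities, and for each such triple the min in~\eqref{recurrence} ranges over at most $n_{v_i}+1\leq n$ values of $k$, each requiring constant arithmetic. This gives the $O(n^4)$ bound stated in the theorem. The tables for $D$ are obtained in linear time once $C$ is available, and the final minimum over $D(r,\cdot)$ costs $O(n)$ time, which is absorbed.

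The obstacle, if any, is really notational rather than mathematical: one has to be careful that the base cases of equation~\eqref{recurrence} cover the situation where $v$ is a leaf (so $c_v=0$ and $X_{0,v}=\emptyset$, forcing $j=0$), and that $D(v,j)$ is defined only for $0\leq j\leq n_v$. Since all of this is captured by the case split in Proposition~\ref{prop:Dv0} and the first branch of~\eqref{recurrence}, nothing further is needed. A tighter charging argument (pairing each $(j,k)$ choice with a pair of vertices whose lowest common ancestor is $v$) would in fact lower the bound to $O(n^3)$ or even $O(n^2)$, but since the statement only claims $O(n^4)$, I would not pursue this refinement.
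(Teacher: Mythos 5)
Your proposal is correct and follows essentially the same route as the paper: a bottom-up (post-order) evaluation of the tables $C$ and $D$, correctness delegated to Propositions~\ref{prop:Dv0} and~\ref{prop:recurrence}, and a crude count of table entries times the cost of the inner minimum yielding $O(n^4)$. The only cosmetic difference is that you bound the number of triples $(v,i,j)$ globally while the paper bounds the work per vertex, but both give the same (loose) estimate, so nothing further is needed.
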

\begin{proof} The algorithm is given by expressions (\ref{D_rj}) and (\ref{recurrence}), and \suggestion{the}{its} correctness is immediate by Propositions~\ref{prop:Dv0} and \ref{prop:recurrence}. For the complexity, notice that the algorithm proceeds in a bottom-up manner, from the leaves of $T$ up to the root, and from left to right for vertices located at the same level of the tree. For every vertex $v \in V$, every $0 \leq i \leq c_v$ and every $0 \leq j \leq x_{i,v}$, we compute the recurrence relation (\ref{recurrence}). This means that the values for expressions $C(v, i - 1, k)$ and $D(v_i, j - k)$ have already been calculated for every $j-n_{v_i} \leq k \leq j$ by the time we compute $C(v, i, j)$, so they are $O(1)$. Hence, computing $C(v, i, j)$ is $O(n)$. Since this is required for every $i$ and $j$ as defined above, we perform $O(n^3)$ operations at every vertex of $T$, except for the root. Indeed, for the root $r$, we only need to compute $C(r, c_r, j)$, for $0 \leq j \leq n$. This amounts to $O(n^2)$ operations. The complete algorithm is thus $O(n^4)$.
\end{proof}

\section{An algorithm for PIG-completion on caterpillars}\label{sec:caterpillar}

In \newsuggestion{}{the} previous sections, we studied PIG-completion within very dense subclasses of interval graphs. Another possible approach is to restrict the input to very sparse ones. Namely, caterpillars are those interval graphs that are also trees, and thus the sparsest subclass of interval graphs.
We will show that PIG-completions are very particular when the input is a caterpillar, and we will give a quadratic-time algorithm to compute them.

For a caterpillar $G$, we call $P$ the \emph{central path}, the \emph{father} of a leaf is its neighbour in $P$, and the \emph{sons} of a vertex of $P$ is the set of leaves it is adjacent to.

Let us consider a PIG-completion of $G$, and a \newsuggestion{proper interval model of it where every interval is a closed interval of length $1$}{unit interval model of it}. First, we show that such a model can be transformed into a particular one where the vertices of the central path are intervals of the form $[i,i+1]$. Later, we will describe an $O(n^2)$-time algorithm to get the best possible completion.

\begin{theorem}\label{teo:combinatorial_caterpillar}
For every caterpillar, there exists a minimum PIG-completion \newsuggestion{}{ with a unit interval model} such that \newsuggestion{the vertices of the central path are intervals}{each vertex of the central path is represented by an interval} of the form $[i,i+1]$ (where $i$ is an integer), and \newsuggestion{each of the leaves adjacent to each vertex can be represented by one of two disjoint intervals, for instance of the form $[i \times (1 + \frac{1}{k}) - 1, i \times (1 + \frac{1}{k})]$}{its sons are represented either by $[i \times (1 + \frac{1}{k}) - 1 , i \times (1 + \frac{1}{k}) ]$ or $[(i+1) \times (1 + \frac{1}{k}) - 1 , (i+1) \times (1 + \frac{1}{k}) ]$} (where $k$ is the length of the main path of the caterpillar).
\end{theorem}

\begin{proof}
First, consider all \newsuggestion{of }{}the leaves and replace each of their closed intervals of length $1$ by one that is open on the left and closed on the right. This can only decrease the number of edges in the graph. Let us now transform it into one that requires at most as many edges. To do this, we will use an interval model that is not a proper interval model, in which some intervals are points while the others are intervals that have identical size (say $1$). 
 We will later transform this model into a unit interval model that represents the same graph. 

Consider the vertices of $P$ in order $v_0, \dots, \newsuggestion{v_k}{v_{k-1}}$. For $i \in \{0, \dots,\newsuggestion{}{k-1}\}$, let $V_i$ be the set of sons of $v_i$. For every vertex $v_j$ of $P$, do the following transformation to the model:

\begin{itemize}
\item For all $i \in \{0, \dots,\newsuggestion{}{k-1}\}$,
let us denote by $V_i^\ell$ the set of sons of $v_i$ whose corresponding intervals contain the left endpoint of $v_i$, and by $V_i^r$ the set of sons of $v_i$ whose intervals contain the right endpoint of $v_i$, plus those that are represented by the same interval as $v_i$. We will refer to the vertices of $V_i^\ell$ as the \emph{left sons} of $v_i$, and the vertices of $V_i^r$ as the \emph{right sons} of $v_i$.

\item For a certain integer \newsuggestion{}{$j \in \{0, \dots,k-1\}$}, we make the following assumption, which we will call \emph{assumption $A_j$}.
\begin{itemize}
\item For every integer \newsuggestion{}{$i \in \{0, \dots,j\}$}, vertex $v_i$ is represented by the interval $[i,i+1]$. 

\item For every integer \newsuggestion{}{$i \in \{0, \dots,j-1\}$}, the vertices in $V_i^\ell$ are represented by the point $i$, and the vertices in $V_i^r$ are represented by the point $i+1$. All the leaves of vertices of $P$ from $v_j$ on are represented by intervals of size $1$ (with one side open and one closed, as explained above).
\end{itemize}
\newsuggestion{We note}{Observe} that a left son of a vertex $v_i$ for $i < j$ is adjacent to $v_i$, $v_{i-1}$ and its right sons (if $v_{i-1}$ exists), as well as maybe some vertices of $P$ from $v_j$ on and some of their sons. Similarly, a right son of a vertex $v_i$ for $i < j$ is adjacent to $v_i$, $v_{i+1}$ and its left sons (if $i<j-1$), as well as maybe some vertices of $P$ from $v_j$ on and some of their sons. Finally, a vertex $v_i$ is adjacent to its sons, $v_{i-1}$ and its right sons if they exist, and $v_{i+1}$ and its left sons, as well as maybe some vertices of $P$ from $v_j$ on and some of their sons.

Let $S_j$ be the set of vertices containing every $v_i$ \newsuggestion{for $i+1 < j$, every son of $v_i$ for $i+1 < j$,}{and their sons for $i < j-1$,} and every left son of $v_{j-1}$ (if it exists). Let us denote by $T_j$ the set of vertices containing every $v_i$ for $i > j$ and all \newsuggestion{ of}{} their sons. Notice that $v_j$, $v_{j-1}$ (if it exists) as well as the sons of $v_j$ and the right sons of $v_{j-1}$ are not in $S_j$ nor in $T_j$. 

The edges between a vertex of $S_j$ and a vertex of $T_j$ are called \emph{unimportant edges}. The other edges are \emph{important edges}. We will not count the unimportant edges, but we will show that the number of important edges never increases, and also that there are no unimportant edges in the end. Note that the set of important edges changes when $j$ augments. Let us denote $I_j$ the number of important edges for step $j$. 

Observe that, if $A_{j+1}$ is verified, then no unimportant edge can become important when we go from $I_j$ to $I_{j+1}$ since only important edges can become \newsuggestion{non-important}{unimportant}. Indeed, such an edge would have to be between $v_{j+1}$ and its leaves on the one side, and vertices of $\{v_{j-2}\} \cup V_{j-2}^r \cup V_{j-1}^\ell$ on the other side. 
But by $A_{j+1}$, the models of $\{v_{j-2}\} \cup V_{j-2}^r \cup V_{j-1}^\ell$ are in $[j-2,j-1]$, the model of $v_{j+1}$ is $[j+1,j+2]$, and its leaves are intervals of length $1$, open on one side and closed on the other, intersecting $[j+1,j+2]$, therefore the two sets do not intersect and there can be no such new important edge.

\item Let us transform the model in order to make assumption $A_{j+1}$ true. 
 
 If $v_{j+1}$ has its left extremity to the left of $v_j$, then do a symmetry of the models of $T_j$ by point $j + \frac{1}{2}$.

Now $v_{j+1}$ does not have its left extremity to the left of $v_j$. We translate the models of $T_j$ to the right, so that $v_{j+1}$ corresponds to $[j+1,j+2]$. Let \newsuggestion{$K$}{$C$} be the clique formed by all\newsuggestion{ of}{} the vertices of $T_j$ whose models intersect with point $j+1$ after the transformations. Note that before the transformations, the vertices of \newsuggestion{$K$}{$C$}  were all intersecting a singular point in $[j,j+1]$. Indeed, the translation was of at most one to the right, and the symmetry did not alter this. The vertices of \newsuggestion{$K$}{$C$}  can be partitioned into the set \newsuggestion{$K_1$}{$C_1$}  of those that initially intersected $j$ and the set \newsuggestion{$K_2$}{$C_2$}  of those that did not. Since every vertex of \newsuggestion{$K$}{$C$}  is represented by an interval of length $1$, either a closed one or one closed on one side and open on the other, vertices in \newsuggestion{$K_2$}{$C_2$}  initially intersected $j+1$.
Now we switch $\min ( |V_j^r|, |C_1|)$ leaves from $V_j^r$ to $V_j^\ell$, replace leaves in $V_j^\ell$ by point $j$ and those in $V_j^r$ by point $j+1$.
Although we changed the graph as described previously and this would change the value of the sizes of the sets, in our equations the values of $|V_j^r|$, $|V_j^\ell|$, \newsuggestion{}{$|C_1|$, $|C_2|$, and $|C|$} will remain as before the transformations, so that there is no ambiguity. 

It is easy to see that $A_{j+1}$ is now verified. However, we must show that the number of important edges did not increase. We will compare here the number of important edges before the symmetry/translation, with the number of important edges after $I_j$ is changed to $I_{j+1}$. The replacement of intervals by a point contained in the interval may of course only remove edges, and not add edges. The translation and the symmetry do not alter edges between vertices of $T_j$, nor edges between vertices that are not in $T_j$. The only edges that can be added are thus between vertices that are in $T_j$ and vertices that are not in $T_j$, plus edges between vertices that were switched from $V_j^r$ to $V_j^\ell$. Among those edges, we only need consider those that are in $I_{j+1}$. 

Let us first consider the edges that may be added between vertices in $S_{j+1}$ and $T_j$. Note that vertices in $S_{j+1}$ are represented in $[0,j]$. For one such edge to be in $I_{j+1}$, it would have to be between a vertex of $S_{j+1}$ on one side, and $v_{j+1}$ or one of its leaves on the other side. But it follows from a previous discussion that the model of $v_{j+1}$ is $[j+1,j+2]$, and its leaves are intervals of length $1$, open on one side and closed on the other, intersecting $[j+1,j+2]$. Therefore the two sets do not intersect and there can be no such new important edge.

Thus, we only need to consider the edges between vertices in the two cliques represented by points $j$ and $j+1$. Moreover, we only need to consider those edges between $T_j$ on the one hand and the leaves of $v_j$ on the other hand, plus the edges between the leaves of $v_j$ that are transferred on the one hand, and the other leaves of $v_j$ to which they are adjacent to on the other hand. Let $V_\ell = \{v_{j-1}\} \cup V_{j-1}^r \cup V_j^\ell$. 

After the transformation, the number of edges to consider is $|V
^\ell| \times \min ( |V_j^r|, |C_1|) + |C| \times (|V_j^r| - \min ( |V_j^r|, |C_1|))$. Since (before the transformation) the vertices of $V_\ell$ are adjacent to those of \newsuggestion{$K_1$}{$C_1$} and those of $V_j^r$ are adjacent to those of \newsuggestion{$K_2$}{$C_2$} plus the other vertices of $V_j^r$, it follows that the number of edges considered before the transformation is at least $|V^\ell| \times |C_1| + |C_2| \times (|V_j^r| - \min ( |V_j^r|, |C_1|)) + (|C_2| + |V_j^r| - \min ( |V_j^r|, |C_1|)) \times \min ( |V_j^r|, |C_1|)$.

\begin{itemize}
    \item Case $|V_j^r| \ge |C_1|$. In this case we consider $|V^\ell| \times |C_1| + |C| \times (|V_j^r| - |C_1|)$ edges after the transformation, and at least $|V^\ell| \times |C_1| + |C_2| \times (|V_j^r| - |C_1|) + (|C_2| + |V_j^r| - |C_1|) \times |C_1|$, therefore there are at least $|C_2| \times |C_1| \ge 0$ more edges considered before the transformation than after, and the number of important edges does not increase.

    \item Case $|V_j^r| < |C_1|$. In this case we consider $|V^\ell| \times |V_j^r|$ edges after the transformation, and at least $|V^\ell| \times |C_1| + |C_2| \times |V_j^r|$ before the transformation, therefore there are at least $|C_2| \times |V_j^r| \ge 0$ more edges considered before the transformation than after, and the number of important edges does not increase.
\end{itemize}

\end{itemize}

Note that we can assume that $A_0$ (that is precisely that $v_0$ is represented by $[0,1]$) is verified in the beginning and holds, up to translation of the model. Then, after applying the previous transformations for every $j \in \{0,\dots,\newsuggestion{}{k-2}\}$, we know that $A_{\newsuggestion{}{k-1}}$ is true and that the number of edges has not increased (for $A_{\newsuggestion{}{k-1}}$, $T_{\newsuggestion{}{k-1}} = \emptyset$, and thus every edge is important). Now we can replace every leaf in $V_{\newsuggestion{}{k-1}}^\ell$ by the point $\newsuggestion{}{k-1}$ and every leaf in $V_{\newsuggestion{}{k-2}}^r$ by the point $\newsuggestion{}{k}$ to obtain a model where:

\begin{itemize}
    \item For every $i \in \{0,\dots,\newsuggestion{}{k-1}\}$, the vertex $v_i$ is represented by the interval $[i,i+1]$. 

    \item For every $i \in \{0,\dots,\newsuggestion{}{k-1}\}$, the vertices in $V_i^\ell$ are represented by the point $i$, and the vertices in $V_i^r$ are represented by the point $i+1$.
\end{itemize}

For each $i \in \{0,\dots,k\}$, let us denote by $W_i$ the set of vertices represented by the integer $i$.
Now we keep the model of every $v_i$ and replace each vertex in $W_i$ with the interval \newsuggestion{}{$[i \times (1 + \frac{1}{k})-1, i \times (1 + \frac{1}{k})]$}. These intervals, which are closed of length $1$, indeed do not intersect unless they are equal, and keep the same intersections with the $v_i$'s as their previous models. 
Thus we get a unit interval model of the same graph as obtained with the previous algorithm, which therefore has no fewer edges than the initial one.
\end{proof}

An interesting question that arises is whether this suffices to characterize all inclusion-wise minimal completions or not. In Figure~\ref{fig:counterexample_minimal} we can see depicted a caterpillar graph $G$, represented by the black edges, and a supergraph $H$ obtained by adding the red edges that is indeed an inclusion-wise minimal PIG-completion of $G$. However, $H$ is not obtained by splitting into two parts each set of leaves adjacent to a same vertex of the spine.

\begin{figure}
    \centering
    \includegraphics[scale=.3]{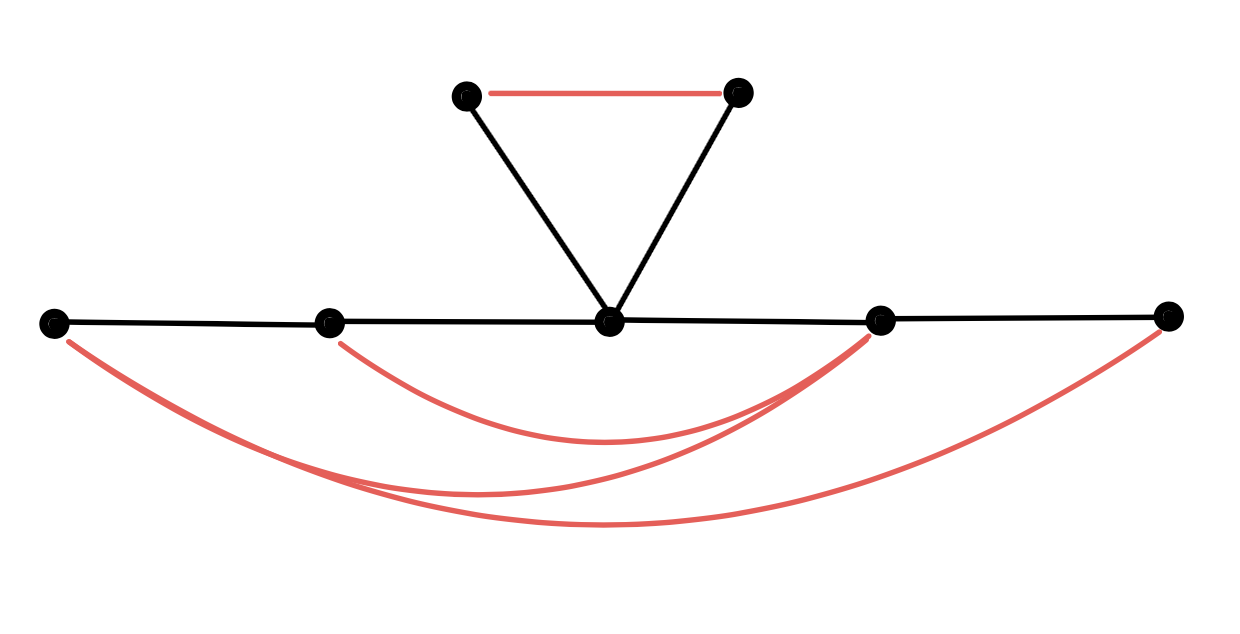}
    \caption{A caterpillar graph (black edges) and an inclusion-wise minimal PIG-completion (red edges) for which the characterization given in Theorem~\ref{teo:combinatorial_caterpillar} does not hold.}
    \label{fig:counterexample_minimal}
\end{figure}

Now that we know that there exists a minimum PIG-completion as the one stated in Theorem~\ref{teo:combinatorial_caterpillar}, we design an algorithm that will always output such a solution. 

\begin{theorem}
There exists an algorithm that outputs a minimum PIG-completion of a caterpillar in $O(n^2)$-time.
\end{theorem}
\begin{proof}
We keep using the notation defined in the proof of the theorem above.
The only thing that remains to be determined is the size of $V_i^\ell$ (or equivalently that of $V_i^r$) for every $i \in \{1,\dots,k\}$. We will determine them via dynamic programming. More precisely, for every $i \in \{1,\dots,k\}$, and for every $j \in \{0, \dots, |V_i|\}$, we compute the optimal number $N_{i,j}$ of edges added to complete the caterpillar induced by $\{v_i, v_{i+1}, \dots, v_{k}\} \cup W_{i+1} \cup W_{i+2} \cup \dots \cup W_{k+1}$ into a proper interval graph, given that $|V_i^\ell| = j$. We use the following formula to compute it: 

\begin{itemize}
    \item For $j \in \{0, \dots, V_k\}$, $N_{k,j} = \binom{|V_k| - j}{2}$;
    
    \item For $i \in \{1,\dots,k-1\}$ and $j \in \{0, \dots, V_i\}$, $N_{i,j} = \min_{j' \in \{0, \dots, |V_{i+1}|\}}\left (\binom{|V_k| - j + j' + 1}{2} + N_{i+1,j'} \right )$.
\end{itemize}

Now, the best PIG-completion of the whole caterpillar graph uses exactly \[\min\limits_{j \in \{0, \dots, |V_{0}|\}}\left (\binom{j}{2} + N_{1,j} \right )\] edges. One can get such a completion from the choices of $j$ obtained to get the minimal values.

Since $\sum_{i \in \{0, \dots, k\}}|V_i| \le n$, we know that there are at most $n$ different values of $N_{i,j}$ to be computed. Each one is computed using a minimum over at most $n$ different expressions. Therefore the complexity of this algorithm is in $O(n^2)$.
\end{proof}




\section{Conclusions and future work}\label{sec:conclusions}

In this work, we study the $\Pi$-completion problem when $\Pi$ is the class of proper interval graphs. Given that the problem is NP-complete in general graphs, we focus \rv{our analysis} 
\rv{on} 
the case in which the input graph lies in some particular subclasses of chordal graphs. We prove that the problem remains hard in split graphs, whereof we conclude the same for chordal graphs. We present efficient algorithms for \suggestion{}{PIG-completion for both} threshold \suggestion{}{graphs} and caterpillar graphs, and an efficient algorithm for co-bipartite-completion for quasi-threshold graphs. \\
A future line of work is to continue \rv{studying} the PIG-completion problem in other subclasses of chordal graphs that have bounded cliquewidth to obtain practical polynomial-time algorithms. This may lead to finding common properties that could be useful when it comes to designing efficient algorithms and heuristics to solve the problem within other chordal subclasses.\\
On the other hand, given that all the graph classes for which we give a polynomial-time algorithm are also subclasses of interval graphs, and that interval and proper interval graphs are very closely related, it raises as a natural question whether the PIG-completion problem can be solved in polynomial time when the input graph already belongs to this particular class.\\
In addition, since we studied the PIG-completion problem within caterpillars motivated by the fact that these graphs are precisely those interval graphs that are also trees, this gives way to an analogous question regarding the complexity of the PIG-completion problem when the input is a tree. Since PIG-completion is expressable in $MSOL_2$, the problem is polynomial-time solvable on trees. However, given that our current algorithm strongly relies on an interval model of the graph, the problem on trees in general should be addressed using new techniques. \\
An interesting question arises also in the relation with the max-cut problem: for which other classes besides threshold graphs does it hold that PIG-completion is equivalent to co-bipartite-completion? For these possible classes, an algorithm for max-cut in the complement would also solve the PIG-completion problem, by the same argument given in Section \ref{sec:threshold}.\\
Another possible continuation for this work may be to study the \emph{PIG-deletion problem}, i.e., the removal of a set of edges $F$ from an input graph $G =(V,E)$, so that the resulting subgraph $H = (V, E \setminus F)$ is a proper interval graph. It is known that this problem is hard for general graphs~\cite{Goldberg1995}, but it would be interesting to investigate whether efficient algorithms could also be devised for this problem restricted to the aforementioned subclasses of chordal graphs.\\




\section*{Acknowledgements}

Partially supported by Programa Regional MATHAMSUD MATH190013, by CNPq grant 311679/2018-8, Argentina PIP 2021-2023 20020190200124BA, Argentina UBACyT 20020170100495BA.

\bibliographystyle{plain} 
\bibliography{biblio} 


\end{document}